\newcommand{\pushright}[1]{\ifmeasuring@#1\else\omit\hfill$\displaystyle#1$\fi\ignorespaces}
\newtheorem{fact}{Fact}
\newtheorem{definition}{Definition}
\newtheorem{assumption}{Assumption}
\newtheorem{lemma}{Lemma}
\newtheorem{remark}{Remark}
\newtheorem{corollary}{Corollary}
\newtheorem{proposition}{Proposition}
\def\begquo{\begin{quote}}
	\def\endquo{\end{quote}}
\def\begequarr{\begin{eqnarray}}
	\def\endequarr{\end{eqnarray}}
\def\begequarrs{\begin{eqnarray*}}
	\def\endequarrs{\end{eqnarray*}}
\def\begarr{\begin{array}}
	\def\endarr{\end{array}}
\def\begequ{\begin{equation}}
	\def\endequ{\end{equation}}
\def\lab{\label}
\def\begdes{\begin{description}}
	\def\enddes{\end{description}}
\def\begenu{\begin{enumerate}}
	\def\begite{\begin{itemize}}
		\def\endite{\end{itemize}}
	\def\endenu{\end{enumerate}}
\def\lef[{\left[\begin{array}}
	\def\rig]{\end{array}\right]}
\def\begcen{\begin{center}}
	\def\endcen{\end{center}}
\def\begrem{\begin{remark}\rm}
	\def\endrem{\end{remark}}
\def\begdef{\begin{definition}}
	\def\enddef{\end{definition}}
\def\begpro{\begin{proposition}}
	\def\endpro{\end{proposition}}
\def\begfac{\begin{fact}}
	\def\endfac{\end{fact}}
\def\begass{\begin{assumption}}
	\def\endass{\end{assumption}}
\def\begsubequ{\begin{subequations}}
	\def\endsubequ{\end{subequations}}
\def\begmat#1{\begin{bmatrix}#1\end{bmatrix}}
\def\begali#1{\begin{align}{#1}\end{align}}
\def\begalis#1{\begin{align*}{#1}\end{align*}}
\def\calg{{\cal G}}
\def\calo{{\cal O}}
\def\calc{{\cal C}}
\def\calw{{\cal W}}
\def\cald{{\cal D}}
\def\L2e{{\cal L}_{2e}}
\def\rea{\mathds{R}}
\def\adj{\mbox{adj}}
\def\col{\mbox{col}}
\def\TAC{{\it IEEE Trans. Automatic Control}}
\def\SCL{{\it Systems and Control Letters}}
\def\AUT{{\it Automatica}}
\def\calg{{\cal G}}
\def\calo{{\cal O}}
\def\calw{{\cal W}}
\def\calp{{\mathfrak p}}
\journal{Journal of \LaTeX\ Templates}
\begin{document}
\begin{frontmatter}
\title{\textbf{On-line Estimation of the Parameters of the Windmill Power Coefficient }}
\author[itmo]{Alexey Bobtsov}
\author[itam,itmo]{Romeo Ortega}
\author[supelec,itmo]{Stanislav Aranovskiy}
\author[itam]{Rafael Cisneros\corref{mycorrespondingauthor}}
\cortext[mycorrespondingauthor]{Corresponding author}
\ead{rcisneros@itam.mx}

\address[itmo]{ ITMO University, 197101 St. Petersburg, Rusia}
\address[itam]{Instituto Tecnol\'ogico Aut\'onomo de M\'exico, 01080 Mexico city, Mexico}
\address[supelec]{CentraleSup\'elec, 35576 Rennes, France}

%%%%%%%%%%%%%%%%
\begin{abstract}
	Wind turbines are often controlled to harvest the maximum power from the wind, which corresponds to the operation at the top of the bell-shaped power coefficient graph. Such a mode of operation may be achieved implementing an extremum seeking data-based strategy, which is an invasive technique that requires the injection of harmonic disturbances. Another approach is based on the knowledge of the analytic expression of the power coefficient function, an information usually unreliably provided by the turbine manufacturer. In this paper we propose a globally, exponentially convergent  on-line estimator of the parameters entering into the windmill power coefficient function. This corresponds to the solution of an identification problem for a nonlinear, nonlinearly parameterized, underexcited system.  To the best of our knowledge we have provided the first solution to this challenging, practically important, problem. 
\end{abstract}
%%%%%%%%%%%%%%%%
%
\begin{keyword}
	Parameter estimation,\; Windmill turbine\; 
	%\MSC[2010] 00-01\sep  99-00
\end{keyword}

\end{frontmatter}

%\linenumbers
%%%%%%%%%%%%%%%%%%5
\section{Introduction}
\lab{sec1}
%%%%%%%%%%%%%%%%%%
%
Wind power is a primary renewable energy source. For the majority of the time wind turbines operate below the rated wind speed; the so-called Region-2 operation. The primary objective of Region-2 wind turbine control is to harvest the maximum power, a goal that is referred to as maximum power point tracking (MPPT). To achieve this objective it is necessary to operate the wind turbine at the maximum point of the power map, which is a bell-shaped function---usually called the $\calc_p$ power coefficient---that has a unique  maximum power point with respect to turbine speed at each level of wind speed. Given the knowledge of the wind speed the rotor speed is then selected to operate the turbine at this point.

It is clear that to achieve MPPT it is necessary to compute the maximum of the windmill power coefficient. MPPT algorithms are divided into two groups \cite{KUMCHA}. One uses  {\em data-driven} approaches, {\em e.g.}, extremum seeking \cite{GHAetal}, based on the electrical current and voltage measurements. This is an invasive procedure that requires the injection of harmonic perturbations to the turbine, which is undesirable. The second one uses the  wind measurements and assumes the {\em knowledge} of the $\calc_p$ curve. If this knowledge is accurate, the latter are more efficient and faster in their response. The $\calc_p$ curve is usually approximated by analytic nonlinear functions, as first suggested in \cite{WASetal}---see also \cite[Subsection 2.5.2]{HEIbook}. To faithfully capture the bell-shaped form of this curve the functions are the product of a polynomial expression and an exponential one. The coefficients on this function are then selected to match the data fields collected for specific turbines---a step that is critical to achieve the control objective.

In this paper we propose an identification procedure to {\em estimate on-line} the coefficients of the classical $\calc_p$ function given measurements of the rotor angular speed. We consider two possible scenarios:\\

\noindent {\bf S1}  The wind speed is {\em constant} and known and the windmill is operated {\em off-grid}.\\

\noindent {\bf S2}  The wind speed is not constant, its {\em derivative is known} and it is possible to manipulate the electrical torque.\\

 Given the availability of various wind-speed estimators, the assumption that the wind speed is known is very mild.\footnote{A recent comparative experimental study of various wind speed estimators may be found in \cite{SOLetal}.}  On the other hand, the additional requirement that the wind speed is {\em constant} is rather restrictive. However, as shown in the simulations of Section \ref{sec8}, the estimator is robust {\em vis-\`a-vis} variations of the wind speed. The assumption of off-grid operation stymies the application of our estimator in an on-line adaptive MPPT. This obstacle is partially relaxed in the scenario {\bf S2}, but it has the drawback that we need to know the wind's acceleration. Moreover, as shown below, it is necessary to impose a particular behavior to the generator torque. Although this is, in principle possible manipulating the converter current, this is not a standard procedure in applications. In view of this situation we will concentrate in the paper in the scenario {\bf S1} and just indicate the key steps required to implement the estimator in the scenario {\bf S2}. \\

It is shown in the paper that, the problem of estimation of the coefficients of the $\calc_p$ curve, boils down to the solution of the following.\\ 

\noindent {\bf Identification Problem} Given a scalar, nonlinear system described by the differential equation 
$$
\dot z=-z^3(\theta_1 z- \theta_2)e^{-\theta_3 z}, 
$$
with $z(t)>0$ measurable and the constant coefficients $\theta_i>0$ such that the system has a unique globally asymptotically stable equilibrium point $\bar z> 0$. Construct an on-line parameter estimator that ensures $\hat \theta(t) \to \theta$, globally and exponentially for all initial conditions $z(0)>0$.\\

To the best of our knowledge none of the existing identification techniques provides an answer to this problem, which is particularly difficult because: (i) the system dynamics is highly nonlinear; (ii) the nonlinear dependence on the unknown parameters; (iii) the extremely ``weak excitation" conditions of the system---namely, a trajectory $z(t)$ starting in some initial condition and converging to a constant equilibrium point. 

We provide in this paper the first solution to this problem proceeding as follows. First, we introduce a suitable (nonlinear) dynamic extension that allows us to obtain a regression equation for the system. Unfortunately, this is a {\em nonlinearly parameterized regressor equation} (NLPRE), that is, a relation of the form $y=\phi \calg(\theta)$, where the mapping $\calg:\rea^3 \to \rea^4$ is {\em known} and the signals $y(t) \in \rea^2$ and $\phi(t) \in \rea^{2 \times 4}$ are {\em measurable}. The first observation is that {\em overparameterizing} to obtain a linear regression equation (LRE), is not effective because of the lack of sufficient excitation in the regressor. On the other hand, there are parameter estimators for NLPRE of this form, {\em e.g.}, \cite{ORTetalaut21,WANetal}, however they impose a monotonicity condition on the mapping $\calg$, that is not verified in our example. A second step is then to construct a new NLPRE that verifies the monotonicity requirement. Alas, this is still not sufficient for the application of the estimators reported in  \cite{ORTetalaut21,WANetal}---again because of the weak excitation conditions. Therefore, the third step is to {\em extend} the least squares (LS)-based estimator recently introduced in \cite{PYRetal} for LRE to make it applicable to NLPRE. This estimator is a combination of LS and  the dynamic regressor extension and mixing (DREM) algorithm \cite{ARAetaltac17,ORTetaltac21}, hence it is referred in sequel as  [LS+DREM] estimator. The two main features of this estimator is that it ensures global and exponential convergence under the weak assumption of {\em interval excitation} (IE) \cite{KRERIE,TAObook} of the regressor. Moreover, compared with gradient-based estimators, LS ones enjoy  superior convergence properties and noise insensitivity, facts which are widely recognized in the identification literature \cite{LJUbook}.

The remainder of the paper is organized as follows. In Section \ref{sec2} we present the analytic expression of the $\calc_p$ function studied in the paper.  In Section \ref{sec3} we formulate the problem to be solved. Section \ref{sec4} contains some preliminary lemmata. In Section \ref{sec5} we derive the first NPLRE of the system dynamics and in Section \ref{sec6} we construct a new parameterization that satisfies the monotonicity requirement needed for the estimator. Section \ref{sec7} contains our main  result, namely the LS+DREM estimator of the parameters of the power coefficient. Simulation results, which illustrate the {performance} of the proposed estimator, are presented in Section \ref{sec8}. The paper is wrapped-up with concluding remarks and future research in  Section \ref{sec9}. \\

\noindent {\bf Notation.} $I_n$ is the $n \times n$ identity matrix.   For $a \in \rea^n$, we denote $|a|^2:=a^\top a$.  We define the derivative operator $\calp[x]:=\frac{dx(t)}{dt}$. The action of an LTI filter  ${\mathfrak F}(\calp) \in \rea(\calp)$ on a signal $x(t)$ is denoted as ${\mathfrak F}[x]$.  To simplify the notation, whenever clear from the context, the arguments of the various functions are omitted. 
%
%%%%%%%%%%%%%
\section{Power Coefficient of the Windmill}
\lab{sec2}
%%%%%%%%%%%%%
%
In this section we present the mathematical model of the power coefficient of the wind turbine that we will consider in the paper. For further details on modeling of windmill systems the reader is referred to the classical textbook  \cite{HEIbook}.  

The power captured by the windmill's turbine is given by the expression
$$
P_w=\kappa v_w^3\calc_p\left(\lambda,\beta\right),
$$
where $\kappa:=\frac{1}{2}\rho A$, $\rho$ is the air density, $A$ is the area swept by the blades, $v_w > 0$ is the wind speed and $\calc_p$ is the so-called {\em power coefficient} that depends on the turbine's pitch angle $\beta \geq 0$ and the tip-speed ratio $\lambda$ defined as
\begin{align}\label{lambda}
\lambda:=\frac{r\omega}{v_w}
\end{align}
where $\omega>0$ is the {\em rotor's angular speed} and $r$ is the {blade's length}. 

In  \cite[Section 2.1.5.2]{HEIbook}, the power coefficient is defined via the function
\begin{align}\label{Cp}
	\calc_p \left(\lambda,\beta\right) = \kappa_1 \Big(\frac{\kappa_2}{\lambda_i} - \kappa_3 \beta - \kappa_4 \beta^\ell-\kappa_5 \Big)e^{-\frac{\kappa_6}{\lambda_i}}
\end{align}
with 
\begequ
\lab{lami}
\frac{1}{\lambda_i}:=\frac{1}{\lambda+0.08\beta}-\frac{\kappa_7}{\beta^3+1},
\endequ
where $\kappa_i>0,\;i=1,\dots,7,$ and $\ell$ are tuning coefficients for the curve fitting.   
  
As indicated in the Introduction, in this paper we  consider the scenario when the wind speed is below the rated speed of the turbine. Then $\beta$ is set to zero to get the maximum power extraction. Thus, setting $\beta=0$  and substituting \eqref{lami} into  \eqref{Cp} yields
\begin{equation}\label{heier}
\calc_p \left(\lambda,0\right)= \kappa_1 \Big[\kappa_2\Big(\frac{1}{\lambda}-\kappa_7\Big) -\kappa_5\Big]e^{-\kappa_6\big(\frac{1}{\lambda}-\kappa_7\big)}.
\end{equation}
To simplify the notation in the sequel we substitute \eqref{lambda} into the equation above, define the new signal
\begequ
\lab{z}
z:=\frac{v_w}{\omega},
\endequ
and, after some manipulations,  we get the following compact expression of the power coefficient
\begin{equation}
\label{cp}
	C_p(z):=\calc_p \left(\frac{r}{z},0\right)=c_1(z-c_2)e^{-c_3z},
\end{equation}
where we defined the positive constants
$$
	c_1:= \frac{1}{r}\kappa_1\kappa_2 e^{\kappa_6\kappa_7},\;c_2:=r(\kappa_7+\frac{\kappa_5}{\kappa_2}), \;c_3:=\frac{\kappa_6}{r}.
$$
The  shape of the function $C_p(z)$ is depicted in Fig. \ref{Cpfig}.
\begin{figure}
	\centering
\includegraphics[scale=0.3]{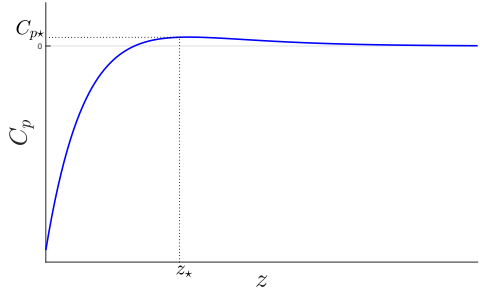}
	\caption{Power coefficient $C_p \left(z\right)$. }%\rom{This fig is not clear. If it is $C_p(z)$ the argument $z$, as well as the point $z_*$, should appear on the horizontal axis of the plot. No need to mention $\lambda$ anymore}}
	\label{Cpfig}
\end{figure}
 
As indicated in the Introduction, customary control application consists in making the turbine operate at the speed in which the maximum extraction of power is attained---a strategy called MPPT. Equivalently stated, the turbine must operate at the value of $z=z_{\star}$ of Fig. \ref{Cpfig}. 
%\todo[inline, size = \small]{Sorry, but I do not see $z_{\star}$ in the figure.}
Clearly, the value of $z_{\star}$ is determined by the values of the parameters $c_i$ in \eqref{cp},
\[
	z_{\star} = \frac{c_2 c_3 + 1}{c_3}.
\]
Thus, the knowledge of these parameters is crucial for the optimal operation of the wind turbine.
%
%%%%%%%%%%%%%%5
\section{Dynamics of the System and Problem Formulation}
\lab{sec3}
%%%%%%%%%%%%%%%5
%
In this work we consider the one-mass model of the wind turbine that yields the mechanical dynamics 
\begin{equation}
\lab{mecdyn}
J\dot \omega = T_m - T_e
\end{equation}
where $J>0$ is the rotor inertia and $T_m$, $T_e$ are the mechanical and electrical torques, respectively. The mechanical torque is given by
\begin{equation}
\lab{tm}
T_m= \kappa \frac{v_w^3}{\omega} C_p\Big(\frac{v_w}{\omega}\Big),
\end{equation} 
where the {power coefficient}  $C_p$ is defined in \eqref{cp}.
\subsection{Parameter Estimation Task} 

We proceed now to formulate the estimation problem for the two scenarios {\bf S1} and {\bf S2} described in the Introduction. In both cases we impose the following.\\

\noindent {\bf Assumption A1} [Rotor speed measurement]  The signal $\omega$ is bounded and {\em measurable}.\\

The objective is to design for the system \eqref{mecdyn}, \eqref{tm} an  estimator of the power coefficient parameters $c:=\col(c_1,c_2,c_3)$ of the form
\begali{
\nonumber
\dot {x}_c & =f_c(x_c, \omega)\\
\lab{paa}
\hat c & =h_c(x_c,\omega),
}
where $f_c: \rea^{n_c} \times \rea_+  \to \rea^{n_c},\;h_c:  \rea^{n_c} \times\rea_+  \to \rea^3,$ which ensures {\em global, exponential convergence} of the parameter errors $\tilde c:=\hat c-c$ to zero, with all signals bounded.
\subsubsection*{Additional assumptions for Scenario {\bf S1}}  

\noindent {\bf Assumption  A2} [Wind]  The wind speed $v_w$ is a {\em known constant}.\\

\noindent {\bf Assumption  A3} [Off-grid operation]  The windmill is operating off-grid with {\em no electrical torque}, whence $T_e=0$. 
\subsubsection*{Additional assumptions for Scenario {\bf S2}}  
\noindent {\bf Assumption  A2'} [Wind]  The wind speed $v_w$ is time-varying and, together with its derivative $\dot v_w$, they are {\em measurable}.\\

\noindent {\bf Assumption  A3'} [Generator toque]  The generator torque is chosen to satisfy
\begequ
\lab{refte}
T_e=-J\frac{\dot v_w}{v_w}\omega.
\endequ
\subsection{Discussion on the assumptions}  
\noindent {\bf D1}  The assumption of known rotor speed is quite natural and, actually, necessary for the solution of the estimation problem.\\

\noindent {\bf D2}  The assumption of known wind speed is very mild. The additional requirement that the wind speed is {\em constant} is, on the other hand, rather restrictive. However, as shown in the simulations of Section \ref{sec8}, the LS-based estimator is robust {\em vis-\`a-vis} variations of the wind speed.\\

\noindent {\bf D3} As indicated above the standing assumption in scenario {\bf S1} is that the windmill is operating in an {\em off-grid} scenario. In that case the electrical torque $T_e$ is zero. As will become clear below the problem of parameter estimation with $T_e \neq 0$ is much more complicated and, at this point, we are unable to solve it. Unfortunately, this scenario rules out the possibility of doing on-line estimation of the power coefficient parameters of a windmill transferring power to the grid, which might lead to a practically interesting  model-based adaptive MPPT  controller.\\

\noindent {\bf D4} Interestingly the problem mentioned above regarding the undesired impact of the generator torque {\em disappears} if we can impose to it the value given by \eqref{refte}. As will become clear below, this choice is justified only by the desire to cancel the effect of the term $T_e$ in the regressor equation needed for the parameter estimation---and can hardly be motivated by any other practical consideration. Therefore, in the sequel we concentrate out attention on scenario {\bf S1} and just indicate how to proceed with the design in scenario {\bf S2}.
%
%%%%%%%%%%%%%%5
\section{Preliminary Lemmata}
\lab{sec4}
%%%%%%%%%%%%%%%5
%
In this section we present three preliminary results which are instrumental for the solution of the problem.
\subsection{A change of coordinates}
\lab{subsec41}
%%%%%%%%%%%%%%%5
%
\begin{lemma}\em
\lab{lem1}
The dynamics  \eqref{mecdyn},  \eqref{tm}---verifying {\bf Assumption  A2}---expressed in the coordinate $z$ defined in \eqref{z}  takes the form
\begequ
\lab{zdyn}
\dot z =-z^3(\theta_1 z- \theta_2)e^{-\theta_3 z} + z^2  \tau,
\endequ
where,  to simplify the notation, we defined the signal
$$
	\tau:=\frac{T_e}{Jv_w},
$$
and the constant vector
\begequ
\lab{the}
\theta=\begmat{\theta_1 \\ \theta_2\\ \theta_3}:=\begmat{\frac{\kappa v_w}{J} c_1 \\ \frac{\kappa v_w}{J} c_1c_2 \\ c_3},
\endequ
whose inverse is given as
\begequ
\lab{ccalg}
c=\begmat{c_1 \\ c_2\\ c_3}=\begmat{\frac{J}{\kappa v_w} \theta_1 \\ \frac{\theta_2}{\theta_1}\\ \theta_3}=:\calc(\theta).
\endequ
\end{lemma}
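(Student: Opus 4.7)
The plan is to verify Lemma \ref{lem1} by a direct coordinate change, followed by inverting the parameter mapping componentwise; both parts are essentially algebraic once the right substitutions are arranged.

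First I would differentiate the definition $z=v_w/\omega$ from \eqref{z}. Since under \textbf{Assumption A2} the wind speed $v_w$ is constant, the chain rule gives $\dot z=-v_w\dot\omega/\omega^2$, equivalently $\dot\omega=-(v_w/\omega^2)^{-1}\dot z/\,\text{something}$; more cleanly, $\dot\omega=-\omega^2\dot z/v_w=-(v_w/z^2)\dot z$. Next I would substitute the mechanical dynamics \eqref{mecdyn} to replace $\dot\omega$, and substitute \eqref{tm} together with the closed form \eqref{cp} of $C_p$ to replace $T_m$. The key simplification is that $v_w^3/\omega=v_w^2 z$, so that
\begin{equation*}
T_m=\kappa\,v_w^2\,z\,c_1(z-c_2)e^{-c_3 z}.
\end{equation*}

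Second, I would solve the resulting scalar equation for $\dot z$. Rearranging $J\dot\omega=T_m-T_e$ with the above expressions yields
\begin{equation*}
\dot z=-\frac{z^2}{J v_w}\bigl[\kappa\,v_w^2 z\,c_1(z-c_2)e^{-c_3 z}-T_e\bigr]=-z^3\Bigl(\tfrac{\kappa v_w c_1}{J}z-\tfrac{\kappa v_w c_1 c_2}{J}\Bigr)e^{-c_3 z}+z^2\tau,
\end{equation*}
with $\tau:=T_e/(J v_w)$ as in the statement. Reading off the coefficients of the cubic, linear-in-$z$, and exponent terms gives exactly the definitions of $\theta_1,\theta_2,\theta_3$ in \eqref{the} and establishes \eqref{zdyn}.

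Finally, to obtain the inverse mapping $\calc(\theta)$ in \eqref{ccalg}, I would invert the system \eqref{the} componentwise: $\theta_3=c_3$ is immediate; solving $\theta_1=(\kappa v_w/J)c_1$ yields $c_1=(J/\kappa v_w)\theta_1$; and taking the ratio $\theta_2/\theta_1=c_2$ gives $c_2=\theta_2/\theta_1$, which is well defined because $c_1>0$ (hence $\theta_1>0$). There is no genuine obstacle here; the whole argument is a chain-rule computation plus a triangular inversion of three equations. The only step that requires minor care is bookkeeping the signs and the factor $z^2$ produced by $\dot z=-(v_w/\omega^2)\dot\omega=-z^2\dot\omega/v_w$, which is what ultimately produces the $z^3$ and $z^2$ factors in \eqref{zdyn}.
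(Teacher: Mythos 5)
Your proposal is correct and follows essentially the same route as the paper's proof: differentiate $z=v_w/\omega$ under constant $v_w$, substitute \eqref{mecdyn}, \eqref{tm} and \eqref{cp}, use $v_w/\omega^2=z^2/v_w$ to produce the $z^2\tau$ and $z^3$ factors, and read off $\theta$ as in \eqref{the}. The triangular inversion giving \eqref{ccalg} is the same elementary step the paper leaves implicit, so there is nothing to add.
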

\begin{proof}
Consider the following calculations
\begalis{
\dot z &=-\frac{v_w}{\omega^2}\dot \omega\\
&= \frac{v_w}{\omega^2}\Big( \frac{T_e}{J}-\frac{\kappa}{J} \frac{v_w^3}{\omega} C_p\Big(\frac{v_w}{\omega}\Big) \Big)\\
&= \tau z^2-z^3\frac{\kappa v_w}{J}  C_p(z) \\
&= \tau z^2-z^3\frac{\kappa v_w}{J} c_1(z-c_2)e^{-c_3 z} \\
&= \tau z^2-z^3\Big(\frac{\kappa v_w}{J} c_1 z-\frac{\kappa v_w}{J} c_1c_2 \Big)e^{-c_3 z}.
}
The proof is completed recalling \eqref{the}.
\end{proof} 
The following corollary is a slight variation of the lemma above, which is of interest for the scenario {\bf S2}.

\begin{corollary}\em
\lab{cor1}
The dynamics  \eqref{mecdyn},  \eqref{tm}---verifying {\bf Assumption  A2'} and {\bf Assumption  A3'}---expressed in the coordinate $z$ defined in \eqref{z}  takes the form
\begequ
\lab{zdyn1}
\dot z = -v_w z^3\Big(\bar \theta_1 z- \bar \theta_2\Big)e^{-\bar \theta_3 z},
\endequ
where the constant vector
\begequ
\lab{the1}
\bar \theta=\begmat{\bar \theta_1 \\ \bar \theta_2\\ \bar \theta_3}:=\begmat{{\kappa \over J} c_1 \\ {\kappa  \over J} c_1c_2 \\ c_3},
\endequ
whose inverse is given as
$$
c=\begmat{c_1 \\ c_2\\ c_3}=\begmat{{J\over \kappa } \bar \theta_1 \\ {\bar \theta_2 \over \bar \theta_1}\\ \bar \theta_3}=:\bar \calc(\bar \theta).
$$
\end{corollary}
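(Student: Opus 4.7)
The plan is to follow the same computational route used for Lemma 1, but with two modifications dictated by Assumptions A2' and A3': we must treat $v_w$ as time-varying when differentiating $z=v_w/\omega$, and we must exploit the particular form of $T_e$ prescribed in \eqref{refte} to eliminate the extra term that the time-variation of $v_w$ introduces.

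First, I would compute $\dot z$ directly from \eqref{z} using the quotient rule,
\[
\dot z = \frac{\dot v_w}{\omega} - \frac{v_w}{\omega^{2}}\dot\omega,
\]
and substitute the mechanical dynamics \eqref{mecdyn} to obtain
\[
\dot z = \frac{\dot v_w}{\omega} - \frac{v_w}{J\omega^{2}}\big(T_m - T_e\big).
\]
Next, I would plug in the prescribed torque \eqref{refte}, which produces the term
\[
\frac{v_w}{J\omega^{2}}T_e = -\frac{\dot v_w}{\omega},
\]
and therefore cancels exactly the first summand above. This is the key algebraic step and the reason Assumption A3' is formulated as it is: it kills the $\dot v_w$ contribution that would otherwise enter the regressor.

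What remains is $\dot z = -\dfrac{v_w}{J\omega^{2}}T_m$. Substituting the expression \eqref{tm} for $T_m$ together with the explicit form \eqref{cp} of $C_p(z)$ yields
\[
\dot z = -\frac{\kappa v_w^{4}}{J\omega^{3}}\,c_1(z-c_2)e^{-c_3 z}
       = -v_w\,z^{3}\Big(\tfrac{\kappa}{J}c_1\,z-\tfrac{\kappa}{J}c_1 c_2\Big)e^{-c_3 z},
\]
where in the last step I used $v_w/\omega = z$ three times. Recalling the definitions \eqref{the1} of $\bar\theta_1,\bar\theta_2,\bar\theta_3$ delivers \eqref{zdyn1}. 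The inversion formula for $\bar\calc(\bar\theta)$ is then an immediate rearrangement of \eqref{the1}. No real obstacle is anticipated; the only point that needs care is the sign bookkeeping when substituting \eqref{refte}, so that the $\dot v_w/\omega$ cancellation is correctly identified.
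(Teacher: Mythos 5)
Your computation is correct and follows essentially the same route as the paper's proof: differentiate $z=v_w/\omega$ with the quotient rule, substitute the mechanical dynamics \eqref{mecdyn}, \eqref{tm} with $C_p$ from \eqref{cp}, and use \eqref{refte} to cancel the $\dot v_w/\omega$ term (the paper simply performs this cancellation at the last step rather than up front). No gaps; the sign bookkeeping and the identification of $\bar\theta$ and $\bar\calc(\bar\theta)$ match the paper.
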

\begin{proof}
Consider the following calculations
\begalis{
\dot z &=-\frac{v_w}{\omega^2}\dot \omega + \frac{\dot{v}_w}{\omega}\\
&= \frac{v_w}{\omega^2}\Big( {T_e \over J}-{\kappa \over J} \frac{v_w^3}{\omega} C_p\Big(\frac{v_w}{\omega}\Big) \Big)+ \frac{1}{v_w}z\dot{v}_w\\
&= \frac{z^2}{Jv_w} T_e-z^3{\kappa v_w\over J}  C_p(z)+ \frac{1}{v_w}z\dot{v}_w \\
&= \frac{z^2}{Jv_w} T_e-z^3{\kappa v_w\over J} c_1(z-c_2)e^{-c_3 z}+ \frac{1}{v_w}z\dot{v}_w \\
&= -z^3v_w\Big({\kappa \over J} c_1 z-{\kappa \over J} c_1c_2 \Big)e^{-c_3 z},
}
where we used \eqref{refte} to obtain the last identity. The proof is completed recalling \eqref{the1}.
\end{proof} 
\subsection{Discussion} 
\lab{subsec42} 

Before proceeding with our derivations we take a brief respite to prepare the coming discussion. As is well known a key step for the solution of any parameter identification problem is the derivation of a regression equation. Unfortunately, the technique that we use to do it here is applicable to \eqref{zdyn} only if the additive term $z^2 \tau$ is {\em absent}. Hence the need of {\bf Assumption A3}---that is, assume $T_e=0$---in the scenario {\bf S1}. Interestingly, as shown in Corollary \ref{cor1}, this additive term is absent in \eqref{zdyn1}, but to obtain this result it was necessary to impose the technically motivated {\bf Assumption A3'}.  

There is a second difference between the dynamical systems \eqref{zdyn} and \eqref{zdyn1}, namely, the presence of the signal $v_w$ scaling $\dot z$ in  \eqref{zdyn1}. As will become clear below, this difference is of minor consequences, and just modifies slightly the design of the identifier. 
\subsection{Two key identities}
\lab{subsec43}
%%%%%%%%%%%%%%%5
%
The construction of the regressor equation carried out in Section \ref{sec5} relies on the following key identities. We present it in detail for the case of the dynamics \eqref{zdyn} in the lemma below and just sketch it as a corollary for the case of \eqref{zdyn1}.

\begin{lemma}
\lab{lem2}\em
Consider the representation of the system dynamics \eqref{mecdyn},  \eqref{tm} given in \eqref{zdyn}. Define the signals
	\begali{
	\nonumber
		\dot{\xi}_1&=-z^4,\;\xi_1(0)=0\\
		\dot{\xi}_2&=z^3,\;\;\;\xi_2(0)=0.
\lab{dotxi}
}
The following relation holds
\begequ
\lab{keyide0}		
e^{\theta_3 z(0)}\dot z  +\theta_1\theta_3\xi_1\dot z + \theta_2 \theta_3\xi_2\dot z +\theta_1z^4-\theta_2z^3=\tau_d,
\endequ
where we defined the signal
\begali{
\lab{taud}
\tau_d&:=\tau e^{\theta_3 z}z^2- \theta_3 \dot z  \int_0^t \tau(s) e^{\theta_3 z(s)}z^2(s) ds.
 }
\end{lemma}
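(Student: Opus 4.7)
The proof is a direct algebraic manipulation whose underlying idea is to eliminate the exponential $e^{-\theta_3 z}$ appearing in \eqref{zdyn} by rewriting it as a time integral and then substituting the dynamics back in. The trick is that, after this substitution, the integral of $\dot z\, e^{\theta_3 z}$ splits into pure polynomial-in-$z$ pieces (which are exactly $\xi_1$ and $\xi_2$) plus a piece involving $\tau$ (which becomes part of $\tau_d$).

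The plan is as follows. First, I would multiply \eqref{zdyn} through by $e^{\theta_3 z}$ to obtain the preliminary identity
\begin{equation*}
e^{\theta_3 z}\dot z + \theta_1 z^4 - \theta_2 z^3 = \tau z^2 e^{\theta_3 z}.
\end{equation*}
The factor $e^{\theta_3 z}$ still contaminates $e^{\theta_3 z}\dot z$, so the next step is to express this term in a form where the unknown $\theta_3$ appears only linearly (coupled with $\theta_1$ and $\theta_2$) and multiplied by measurable signals.

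To do this, I would note that
\begin{equation*}
e^{\theta_3 z(t)} = e^{\theta_3 z(0)} + \theta_3\!\int_0^t \dot z(s)\, e^{\theta_3 z(s)}\, ds.
\end{equation*}
Substituting $\dot z\, e^{\theta_3 z} = -\theta_1 z^4 + \theta_2 z^3 + \tau z^2 e^{\theta_3 z}$ (the preliminary identity rearranged) into the integrand gives
\begin{equation*}
e^{\theta_3 z(t)} = e^{\theta_3 z(0)} + \theta_1\theta_3\,\xi_1(t) + \theta_2\theta_3\,\xi_2(t) + \theta_3\!\int_0^t \tau(s)\, e^{\theta_3 z(s)}\, z^2(s)\, ds,
\end{equation*}
where I used \eqref{dotxi}, i.e.\ $\xi_1 = -\!\int_0^t z^4\,ds$ and $\xi_2 = \int_0^t z^3\,ds$.

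Finally, multiplying this expression for $e^{\theta_3 z}$ by $\dot z$ and plugging the resulting representation of $e^{\theta_3 z}\dot z$ back into the preliminary identity yields
\begin{equation*}
e^{\theta_3 z(0)}\dot z + \theta_1\theta_3\,\xi_1\dot z + \theta_2\theta_3\,\xi_2\dot z + \theta_1 z^4 - \theta_2 z^3 = \tau z^2 e^{\theta_3 z} - \theta_3\dot z\!\int_0^t \tau(s)\, e^{\theta_3 z(s)} z^2(s)\, ds,
\end{equation*}
which is precisely \eqref{keyide0} in view of the definition \eqref{taud} of $\tau_d$. There is no genuine obstacle to the argument; the only delicate point is keeping the bookkeeping of signs and indices straight (in particular the sign on $\xi_1$) so that the coefficients $\theta_1\theta_3$ and $\theta_2\theta_3$ emerge with the correct signs. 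The importance of the identity is what it reveals: even though $\theta_3$ enters nonlinearly through the exponential in \eqref{zdyn}, the relation \eqref{keyide0} exhibits it multiplied only by the \emph{measurable} regressors $\xi_1\dot z$ and $\xi_2\dot z$, paving the way for the NLPRE constructed in Section~\ref{sec5}.
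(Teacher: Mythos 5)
Your proposal is correct and follows essentially the same route as the paper: multiply \eqref{zdyn} by $e^{\theta_3 z}$, express $e^{\theta_3 z(t)}$ as $e^{\theta_3 z(0)}$ plus the integral of its derivative with the dynamics substituted in (the paper packages this via the auxiliary signal $q=\tfrac{1}{\theta_3}e^{\theta_3 z}$, a purely cosmetic difference), and then multiply by $\dot z$ to recover \eqref{keyide0} with $\tau_d$ as in \eqref{taud}.
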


\begin{proof}
	Multiplying both sides of \eqref{zdyn} by $e^{\theta_3 z}$ yields
	\begin{equation}
	\lab{ethe3}
		e^{\theta_3 z}\dot z= -\theta_1z^4 +\theta_2 z^3+\tau e^{\theta_3 z}z^2.
	\end{equation}
Defining the signal
	\begequ
	\lab{q}
	q:=\frac{1}{\theta_3}e^{\theta_3 z}
	\endequ
	we get
	\begalis{
	\dot{q}&=  e^{\theta_3 z}\dot z\\
	&=-\theta_1 z^4 +  \theta_2 z^3+\tau e^{\theta_3 z}z^2,
	}
where we used \eqref{ethe3} in the second identity. Replacing the signals defined in \eqref{dotxi}  in the equation above yields
$$
\dot{q}=\theta_1 \dot{\xi}_1 +  \theta_2 \dot{\xi}_2+\tau e^{\theta_3 z}z^2.
$$
Integrating this equation, using \eqref{q} and multiplying by $\theta_3$, we get
\begequ
\lab{ethe3z}
e^{\theta_3z}=e^{\theta_3 z(0)}+\theta_1 \theta_3{\xi}_1 +  \theta_2 \theta_3{\xi}_2+\theta_3 \int_0^t \tau(s) e^{\theta_3 z(s)}z^2(s) ds.
\endequ
Multiplying both sides of \eqref{ethe3z} by $\dot z$, and using \eqref{ethe3}, we get
\begalis{
	 \dot z e^{\theta_3 z}&= \dot z [ e^{\theta_3 z(0)} +\theta_1\theta_3\xi_1+ \theta_2 \theta_3\xi_2+ \theta_3 \int_0^t \tau(s) e^{\theta_3 z(s)}z^2(s) ds.]\\
&=-\theta_1z^4+\theta_2z^3+\tau e^{\theta_3 z}z^2,
}
completing the proof of the claim.
\end{proof}

The corollary below pertains to the scenario {\bf S2}, that is, to the system dynamics \eqref{zdyn1}. Since its proof follows {\em mutatis mutandi} the proof of Lemma \ref{lem2}, we only sketch it briefly. 

\begin{corollary}
\lab{cor2}\em
Consider the representation of the system dynamics \eqref{mecdyn},  \eqref{tm} given in \eqref{zdyn1}. Define the signals
	\begalis{
		\dot{\bar \xi}_1&=-v_w z^4,\;\bar \xi_1(0)=0\\
		\dot{\bar \xi}_2&=v_w z^3,\;\;\;\bar \xi_2(0)=0.
}
The following relation holds
\begequ
\lab{keyide1}		
e^{\bar \theta_3 z(0)}\dot z  +\bar \theta_1 \bar \theta_3\bar \xi_1\dot z + \bar \theta_2 \bar \theta_3 \bar \xi_2\dot z +\bar \theta_1 v_w z^4-\bar \theta_2 v_w z^3=0.
\endequ
\end{corollary}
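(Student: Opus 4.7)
The plan is to mimic the proof of Lemma \ref{lem2} step by step, with the two simplifications that in scenario \textbf{S2} the forcing term proportional to $\tau$ is absent (by virtue of \textbf{Assumption A3'} which was used in Corollary \ref{cor1} to cancel the $T_e$ contribution), and that an extra factor of $v_w$ appears because it was retained as a scaling in \eqref{zdyn1}. Concretely, I would first multiply both sides of \eqref{zdyn1} by the integrating-factor-like quantity $e^{\bar\theta_3 z}$ to obtain
\begin{equation*}
 e^{\bar\theta_3 z}\dot z \;=\; -\bar\theta_1 v_w z^4 + \bar\theta_2 v_w z^3,
\end{equation*}
which is the analogue of \eqref{ethe3} without the $\tau$-term.

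Next, following the same device as in \eqref{q}, I would introduce the auxiliary variable $\bar q:=\frac{1}{\bar\theta_3}e^{\bar\theta_3 z}$, observe that $\dot{\bar q}=e^{\bar\theta_3 z}\dot z$, and then rewrite the right-hand side of the displayed identity using the definitions of $\dot{\bar\xi}_1$ and $\dot{\bar\xi}_2$ so that $\dot{\bar q}=\bar\theta_1\dot{\bar\xi}_1+\bar\theta_2\dot{\bar\xi}_2$. Integrating from $0$ to $t$ (using $\bar\xi_1(0)=\bar\xi_2(0)=0$) and multiplying through by $\bar\theta_3$ yields the closed-form expression
\begin{equation*}
 e^{\bar\theta_3 z}\;=\;e^{\bar\theta_3 z(0)}+\bar\theta_1\bar\theta_3\bar\xi_1+\bar\theta_2\bar\theta_3\bar\xi_2,
\end{equation*}
which is the scenario-\textbf{S2} counterpart of \eqref{ethe3z}, with the stochastic-like integral $\int_0^t \tau(s) e^{\bar\theta_3 z(s)}z^2(s)\,ds$ absent.

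Finally, I would multiply the last identity by $\dot z$ and substitute $\dot z\,e^{\bar\theta_3 z}=-\bar\theta_1 v_w z^4+\bar\theta_2 v_w z^3$ on the left-hand side. Rearranging all terms to one side of the equation delivers precisely \eqref{keyide1}, concluding the proof. I do not anticipate a genuine obstacle here: the only subtle point is bookkeeping of the $v_w$ factor (which must be kept throughout because, unlike in Lemma \ref{lem2} where $v_w$ is constant, in Corollary \ref{cor2} it is time-varying and is embedded into the definitions of $\bar\xi_1,\bar\xi_2$), and the verification that the $\tau$-dependent ``disturbance'' signal $\tau_d$ from \eqref{taud} collapses to zero once \textbf{Assumption A3'} is invoked. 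Both items are routine and do not require any new technical machinery beyond what was used in the proof of Lemma \ref{lem2}.
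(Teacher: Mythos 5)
Your proposal is correct and follows essentially the same route as the paper: multiply \eqref{zdyn1} by $e^{\bar\theta_3 z}$, introduce $\bar q:=\frac{1}{\bar\theta_3}e^{\bar\theta_3 z}$, integrate using the definitions of $\bar\xi_1,\bar\xi_2$, and then multiply by $\dot z$ and substitute back, exactly as in the paper's sketched proof which mimics Lemma \ref{lem2} with the $\tau$-term already absent from \eqref{zdyn1}.
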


\begin{proof} (Sketch)
	Multiplying both sides of \eqref{zdyn1} by $e^{\bar \theta_3 z}$ yields
	$$
		e^{\bar \theta_3 z}\dot z= -\bar \theta_1v_w  z^4 +\bar \theta_2 v_w z^3.
	$$
The proof is completed mimicking the derivations in the proof of Lemma \ref{lem3} but using the signal
$$
	\bar q:=\frac{1}{\bar \theta_3}e^{\bar \theta_3 z},
$$
instead of $q$ defined in \eqref{q}.
\end{proof}
\subsection{The Swapping lemma}
\lab{subsec44}
%%%%%%%%%%%%%%%5
%
Instrumental for the establishment of our result is the well-known Swapping Lemma  \cite[Lemma 6.3.5]{SASBODbook} that we present below in the form that is used here.

\begin{lemma}
\lab{lem3}\em
Given the differentiable scalar signals $x$ and $u$ and a stable filter%\todo[inline,size = \small]{Here and below, I have replaced $\lambda$ with $\sigma$ in ${\mathfrak F}$.} 
\begequ
\lab{f}
{\mathfrak F}(\calp)={\sigma \over \calp + \sigma}
\endequ
with $\sigma>0$, the following relation holds
$$
{\mathfrak F}[x \dot u] =  x \; \calp {\mathfrak F}[u]-{1 \over \calp + \sigma}[\dot x \, \calp {\mathfrak F}[u]].
$$
\end{lemma}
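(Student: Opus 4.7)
The plan is to derive the identity by exploiting the commutativity between the derivative operator $\calp$ and the scalar rational filter ${\mathfrak F}(\calp)$, followed by a single application of the product rule and a judicious choice of how to regroup terms so that the outer filter ${\mathfrak F}$ cancels cleanly.

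First, I would observe that since ${\mathfrak F}(\calp) = \sigma/(\calp + \sigma)$ is a scalar element of $\rea(\calp)$, it commutes with $\calp$, so that $\calp {\mathfrak F}[u] = {\mathfrak F}[\calp u] = {\mathfrak F}[\dot u]$. Denote this common signal by $w$. By construction it is the filter output associated with $\dot u$, so it satisfies $(\calp + \sigma)w = \sigma \dot u$, or equivalently $\dot w + \sigma w = \sigma \dot u$.

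Next, I would apply the ordinary product rule to $xw$:
\[
\calp(xw) \;=\; \dot x\, w + x\, \dot w.
\]
Adding $\sigma xw$ to both sides and using the filter equation for $w$ just established yields
\[
(\calp + \sigma)(xw) \;=\; \dot x\, w + x(\dot w + \sigma w) \;=\; \dot x\, w + \sigma x \dot u,
\]
which, upon solving for $x\dot u$, gives $x \dot u = \tfrac{1}{\sigma}(\calp + \sigma)(xw) - \tfrac{1}{\sigma}\dot x\, w$.

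Finally, I would apply ${\mathfrak F} = \sigma/(\calp + \sigma)$ to both sides. The first term collapses via the cancellation $\tfrac{\sigma}{\calp + \sigma}\cdot \tfrac{1}{\sigma}(\calp + \sigma) = 1$ to give exactly $xw$, while the second becomes $-\tfrac{1}{\calp + \sigma}[\dot x\, w]$. Substituting $w = \calp {\mathfrak F}[u]$ back into the result produces the stated identity. The only nontrivial move is the regrouping in the second step — adding $\sigma xw$ to both sides so that the left factor $(\calp + \sigma)$ appears explicitly and is annihilated by ${\mathfrak F}$; the commutation used at the outset is a standard consequence of LTI operator algebra (equivalently, it holds rigorously under the usual zero initial condition convention for the filter), and the remainder is routine calculus.
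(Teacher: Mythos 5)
Your argument is correct. The paper itself offers no proof of this lemma---it simply cites the Swapping Lemma of Sastry and Bodson and states it ``in the form that is used here''---so your derivation is a self-contained substitute rather than a parallel of an argument in the text. Setting $w:=\calp{\mathfrak F}[u]={\mathfrak F}[\dot u]$, using the realization $\dot w+\sigma w=\sigma\dot u$, applying the product rule to $xw$ and regrouping so that $(\calp+\sigma)$ appears on the left before hitting the identity with ${\mathfrak F}=\sigma/(\calp+\sigma)$ is exactly the operator-level version of the textbook proof (which works with a state-space realization of the filter and an integration by parts in the convolution), specialized to the first-order filter \eqref{f}; if anything, your route is more elementary. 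The one point that deserves the care you gave it is the initial-condition issue: both the commutation $\calp{\mathfrak F}[u]={\mathfrak F}[\dot u]$ and the cancellation $\frac{\sigma}{\calp+\sigma}\cdot\frac{1}{\sigma}(\calp+\sigma)=1$ hold only modulo exponentially decaying transients (equivalently, under the zero-initial-condition convention), which is precisely the convention the paper adopts when it lumps such terms into the decaying signal $\varepsilon_t$ in \eqref{nlpre}, so your proof is consistent with how the lemma is actually used in Proposition \ref{pro1}.
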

\section{Regression Equations for Parameter Identification}
\lab{sec5}
%%%%%%%%%%%%%%%5
%
In this section, proceeding from the key identity \eqref{keyide0}, we construct {\em two scalar} NLPREs that will be used in the next section for the estimation of the power coefficient parameters $c$. Before presenting this result the following remarks are in order. \\

\noindent {\bf R1} The interest of considering two, instead of just one, NLPRE stems from the fact that  the {\em excitation properties} of the matrix regressor---which gathers more information about the system---are better than the ones of the simple scalar one.\\

\noindent {\bf R2}  As mentioned in the Introduction, the parameterization that we obtain is, unfortunately, {\em nonlinear}. Since the classical gradient and least-squares estimators are unable to deal with NLPRE,  it is necessary to use a DREM-based parameter estimator like the ones presented in \cite{ORTetaltac21,WANetal}. This task is carried out in Section \ref{sec7} using an extension of the LS+DREM estimator first reported in \cite{PYRetal}.  \\

To present the main result of this section, namely the derivation of a NLPRE for the system to be used by the parameter estimator, requires to impose either  {\bf Assumption A3} or {\bf Assumption A2'} and {\bf Assumption A3'}. As explained in Subsection \ref{subsec42} the need for these assumptions stems from the fact that we need to remove the additive term for the system dynamics to generate the NLPRE. See the discussion about the need to remove this additive term in Section \ref{sec9}.

Again, as done in previous sections, because of their strong similarity we present in detail the case of scenario {\bf S1} and present without proof the one of scenario {\bf S2}.

\begin{proposition}
\lab{pro1}\em
Consider the system dynamics \eqref{mecdyn},  \eqref{tm} verifying  {\bf Assumptions A1}-{\bf A3}. There exist {\em measurable} signals $y(t)\in \rea^2$ and $\phi(t) \in \rea^{2 \times 4}$ such that the NLPRE
\begequ
\lab{nlpre}
y=\phi \; \calg (\theta) +\varepsilon_t
\endequ
holds, where $\varepsilon_t$ is an exponentially decaying term\footnote{This term stems from the effect of the initial conditions of the various filters. Following standard practice we disregard them in the sequel---see, however, the simulations in Section \ref{sec8}.} and the mapping  $ \calg :\rea^3 \to \rea^4$ is given by\footnote{We underscore the fact that $z(0)$ is known.}
\begequ
\lab{gthe}
 \calg (\theta):=e^{-\theta_3 z(0)}\begmat{\theta_1 \\  \theta_2 \\ \theta_1 \theta_3\\  \theta_2\theta_3}=\begmat{\calg_1 \\ \calg_2 \\ \calg_3 \\ \calg_4}.
\endequ
\end{proposition}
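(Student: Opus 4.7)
My plan is to start from the key identity \eqref{keyide0} of Lemma \ref{lem2}. Under Assumption A3 we have $T_e=0$, hence $\tau\equiv 0$ and by \eqref{taud} also $\tau_d\equiv 0$. Multiplying \eqref{keyide0} by $e^{-\theta_3 z(0)}$ and using the definition \eqref{gthe} of $\calg$, one obtains
\begin{equation*}
\dot z=\begin{bmatrix}-z^{4} & z^{3} & -\xi_{1}\dot z & -\xi_{2}\dot z\end{bmatrix}\calg(\theta),
\end{equation*}
which is already linear in the unknown vector $\calg(\theta)\in\rea^{4}$. The only obstacle is that $\dot z$ appears both on the left-hand side and inside the regressor, while Assumption A1 only grants us measurable $z$.

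To eliminate the time derivative I would apply the stable first-order filter $\mathfrak{F}(\calp)=\sigma/(\calp+\sigma)$ of Lemma \ref{lem3} to both sides. On the left-hand side the identity $\calp\mathfrak{F}(\calp)=\sigma-\sigma^{2}/(\calp+\sigma)$ gives $\mathfrak{F}[\dot z]=\sigma(z-\mathfrak{F}[z])$, which is manifestly measurable. For the two cross-terms $\mathfrak{F}[\xi_i\dot z]$, the Swapping Lemma yields
\begin{equation*}
\mathfrak{F}[\xi_i\dot z]=\xi_i\,\calp\mathfrak{F}[z]-\frac{1}{\calp+\sigma}\big[\dot\xi_i\,\calp\mathfrak{F}[z]\big],
\end{equation*}
which is again measurable, because $\dot\xi_1=-z^{4}$ and $\dot\xi_2=z^{3}$ are known functions of the measured $z$, and $\calp\mathfrak{F}[z]$ is itself realizable as $\sigma(z-\mathfrak{F}[z])$. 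Transients arising from nonzero initial states of $\mathfrak{F}$ and $1/(\calp+\sigma)$ decay exponentially and can be lumped into the $\varepsilon_t$ term.

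This construction already produces a single scalar NLPRE with a $1\times 4$ regressor row. To obtain the $y\in\rea^{2}$ and $\phi\in\rea^{2\times 4}$ claimed by the proposition---and motivated by Remark \textbf{R1}, according to which a richer regressor is beneficial for excitation---I would repeat the procedure with a second filter cutoff $\sigma'\neq\sigma$ and stack the two resulting scalar regressions into the $2\times 4$ matrix $\phi$ and the $2\times 1$ vector $y$. The only real technical hurdle is the elimination of $\dot z$ from the regressor, which is resolved cleanly by the Swapping Lemma; the nonlinear dependence on $\theta$ is left intact inside $\calg$ at this stage and is deferred to the DREM-based estimator designed in Section \ref{sec7}.
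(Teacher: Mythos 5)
Your construction is correct as a proof of the proposition as stated, and its first half is exactly the paper's argument: under \textbf{Assumption A3} one has $\tau\equiv 0$, hence $\tau_d\equiv 0$ in \eqref{keyide0}; multiplying by $e^{-\theta_3 z(0)}$ gives $\dot z=\begmat{-z^4 & z^3 & -\xi_1\dot z & -\xi_2\dot z}\calg(\theta)$, and the filter ${\mathfrak F}(\calp)$ together with the Swapping Lemma (with $\dot\xi_1=-z^4$, $\dot\xi_2=z^3$) removes $\dot z$ and yields the measurable row $y_1=\calp{\mathfrak F}[z]$, $\phi_1$, with filter transients absorbed into $\varepsilon_t$. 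Where you diverge is the second row: you propose re-filtering the \emph{same} scalar identity with a second pole $\sigma'\neq\sigma$, whereas the paper instead rescales the key identity by $z^{-3}$, introduces $\xi_3:=-{1\over 2z^2}$ so that $\dot\xi_3=\dot z/z^3$, and filters the resulting relation, giving $y_2=\calp{\mathfrak F}[\xi_3]$ and a row $\phi_2$ built from $ {\mathfrak F}[z]$, ${\mathfrak F}[1]$ and swapping terms in $1/(2z^2)$. Both constructions deliver measurable $y\in\rea^2$, $\phi\in\rea^{2\times 4}$ satisfying \eqref{nlpre} with the mapping \eqref{gthe}, so the proposition itself is proved either way; the difference matters for what comes next. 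The whole point of the second row (Remark \textbf{R1}, and ultimately \textbf{Assumption A4}) is to enrich the excitation of $\phi$ along a trajectory that merely settles monotonically to a constant equilibrium. Two filterings of one and the same identity differ only through the filter dynamics, so once $z$ has essentially converged the two rows carry nearly proportional information and the matrix $\int_0^{t_c}\phi^\top\phi$ is more prone to being ill-conditioned, making the IE assumption harder to satisfy in practice; the paper's $z^{-3}$-rescaled regression produces a functionally independent nonlinear reshaping of the same data, which is precisely what helps in this severely underexcited setting. So your proof is acceptable for the existence statement, but you should flag that the choice of the second regression is not arbitrary and that the paper's version is preferable with a view to Assumption A4.
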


\begin{proof}
We will derive separately the two components of the NLPRE \eqref{nlpre}. Hence, we define 
$$
y = \begmat{y_1 \\ y_2},\; \phi=  \begmat{\phi_1^\top \\\phi_2^\top},
$$ 
and write the components  of the NLPRE as 
$$
y_i=\phi_i^\top \calg (\theta),\;i=1,2.
$$

\noindent {\em (First component)} Multiplying \eqref{keyide0} by $e^{-\theta_3 z(0)}$ and rearranging terms we get
 \begequ		
\dot z  =  \begmat{-z^4 & z^3 & - \xi_1\dot z &  {-}\xi_2\dot z} \calg (\theta),
\endequ
where we have used the fact that, due to {\bf Assumptions A3}, the signal $\tau_d$ defined in \eqref{taud} is zero. Applying the filter ${\mathfrak F}(\calp)$ defined in \eqref{f} we get
\begalis{
\calp {\mathfrak F}[z] & =- \calg_1  {\mathfrak F}[z^4]  + \calg_2  F {(\calp)}[ z^3]- \calg_3 {\mathfrak F}[\xi_1\dot{z}]  {-}\calg_4 {\mathfrak F}[\xi_2\dot{z}]\\
& = -\calg_1 {\mathfrak F}[z^4]  + \calg_2   F[ z^3] - \calg_3 \Big\{\xi_1 \calp {\mathfrak F}[{z}]-{1 \over \calp + \sigma}[\dot \xi_1\calp {\mathfrak F}[z]]\Big\} \\
&  {-} \calg_4\Big\{\xi_2 \calp {\mathfrak F}[{z}]-{1 \over \calp + \sigma}[\dot \xi_2\calp {\mathfrak F}[z]]\Big\}\\
& =- \calg_1 {\mathfrak F}[z^4]  + \calg_2   \calp F[ z^3] - \calg_3 \Big\{\xi_1 \calp {\mathfrak F}[{z}]+{1 \over \calp + \sigma}[z^4\calp {\mathfrak F}[z]]\Big\} \\
& +\calg_4\Big\{\xi_2 \calp {\mathfrak F}[{z}]-{1 \over \calp + \sigma}[z^3\calp {\mathfrak F}[z]]\Big\}\\
& = \phi_{ {1}}^\top \calg (\theta),
}	
where we applied the Swapping Lemma \ref{lem3} to get the second identity, \eqref{dotxi} for the third one and we defined the regressor vector
$$
 \phi_1:= \begmat{- {\mathfrak F}[z^4]  \\   { {\mathfrak F}[ z^3]} \\-\xi_1 \calp {\mathfrak F}[{z}]-{1 \over \calp + \sigma}[z^4\calp {\mathfrak F}[z]] \\  {-}\xi_2 \calp {\mathfrak F}[{z}] {+}{1 \over \calp + \sigma}[z^3\calp {\mathfrak F}[z]]}
 $$
for the last one. Identifying 
\begalis{
y_1&:=\calp {\mathfrak F}[ z],
}
completes the proof for the first component of the NLPRE.\\

\noindent {\em (Second component)} Multiplying \eqref{keyide0} by $z^{-3}$  we get
\begequ
\lab{seccom}
 e^{\theta_3 z(0)}{\dot z  \over z^3} + \theta_1\theta_3{\dot z  \over z^3}\xi_1  + \theta_2\theta_3{\dot z  \over z^3}\xi_2 + \theta_1 z - \theta_2 =0,
\endequ
Define the signal
$$
\xi_3:=-{1  \over 2z^2},
$$
and notice that 
$$	
\dot \xi_3  = {\dot z  \over z^3}.
$$
Replacing the equation above in \eqref{seccom}  and rearranging terms yields
$$
 e^{\theta_3 z(0)} \dot \xi_3  = - \theta_1 z + \theta_2- \theta_1\theta_3\dot \xi_3   \xi_1  - \theta_2\theta_3\dot \xi_3   \xi_2.
$$
Multiplying by $e^{\theta_3 z(0)}$ we get
$$
 \dot \xi_3  = \begmat{-z & 1 & -  \xi_1 \dot \xi_3  & -    \xi_2\dot \xi_3}\calg(\theta) .
$$
Applying the filter ${\mathfrak F}(\calp)$  we get
$$
  \calp {\mathfrak F}[\xi_3 ]  = \begmat{- {\mathfrak F} [z] &  {\mathfrak F} [1] & - {\mathfrak F} [  \xi_1 \dot \xi_3 ] & -  {\mathfrak F} [  \xi_2\dot \xi_3 ]}\calg(\theta).
  $$
The proof is completed proceeding as done for the first component and defining
\begalis{
 \phi_2&:= \begmat{-{\mathfrak F}[z]  \\    {\mathfrak F}[ 1] \\ \xi_1 \calp {\mathfrak F}[{1  \over 2z^2}] {+}{1 \over \calp + \sigma}[z^4\calp {\mathfrak F}[{1  \over 2z^2}]] \\ \xi_2 \calp {\mathfrak F}[{1  \over 2z^2}] {-}{1 \over \calp + \sigma}[z^3\calp {\mathfrak F}[{1  \over 2z^2}]]}\\
 y_2&:=\calp {\mathfrak F}[ \xi_3].
}
\end{proof}

The corollary below pertains to scenario {\bf S2} where we proceed in the derivations from \eqref{keyide1} instead of \eqref{keyide0}. The proof of the corollary follows {\em verbatim} the one given above---hence it is omitted for brevity.

\begin{corollary}
\lab{cor3}\em
Consider the system dynamics \eqref{mecdyn},  \eqref{tm} verifying  {\bf Assumptions A1}, {\bf A2'} and {\bf A3'}. There exists {\em measurable} signals $\bar y(t)\in \rea^2$ and $\bar \phi(t) \in \rea^{2 \times 4}$ such that the NLPRE
\begequ
\lab{nlpre0}
\bar y=\bar \phi \; \calg (\bar\theta)
\endequ
holds, where the mapping  $ \calg :\rea^3 \to \rea^4$ is given by\footnote{We underscore the fact that $z(0)$ is known.}
$$
 \calg (\bar \theta):=e^{-\bar \theta_3 z(0)}\begmat{\bar \theta_1 \\ \bar \theta_2 \\ \bar \theta_1 \bar \theta_3\\ \bar \theta_2 \bar \theta_3}.
$$
\end{corollary}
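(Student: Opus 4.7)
The plan is to mirror the two-component construction used in the proof of Proposition \ref{pro1}, starting from the key identity \eqref{keyide1} instead of \eqref{keyide0}. The crucial simplification in scenario \textbf{S2} is that the right-hand side of \eqref{keyide1} is exactly zero, so there is no counterpart of $\tau_d$ to dispose of. The only bookkeeping change is that the factor $v_w$ appearing in $\dot{\bar{\xi}}_1 = -v_w z^4$ and $\dot{\bar{\xi}}_2 = v_w z^3$ propagates through the algebra.

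For the first component, I would multiply \eqref{keyide1} by $e^{-\bar\theta_3 z(0)}$ and rearrange to obtain the identity
\[
\dot z = \begmat{-v_w z^4 & v_w z^3 & -\bar{\xi}_1 \dot z & -\bar{\xi}_2 \dot z} \calg(\bar\theta).
\]
Applying the stable filter ${\mathfrak F}(\calp)$ from \eqref{f} to both sides and using Lemma \ref{lem3} (the Swapping Lemma) to exchange $\bar{\xi}_i$ and the differentiation in the last two entries, together with $\dot{\bar{\xi}}_1 = -v_w z^4$ and $\dot{\bar{\xi}}_2 = v_w z^3$, I would read off a regressor vector $\bar\phi_1(t) \in \rea^4$ whose entries are measurable LTI filter outputs of $z$, and set $\bar y_1 := \calp{\mathfrak F}[z]$. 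This gives the first row of \eqref{nlpre0}.

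For the second component, I would divide \eqref{keyide1} by $z^3$ and introduce the auxiliary signal $\xi_3 := -\frac{1}{2 z^2}$, which satisfies $\dot{\xi}_3 = \dot z / z^3$. Substituting yields
\[
e^{\bar\theta_3 z(0)} \dot{\xi}_3 + \bar\theta_1 \bar\theta_3 \dot{\xi}_3 \bar{\xi}_1 + \bar\theta_2 \bar\theta_3 \dot{\xi}_3 \bar{\xi}_2 + \bar\theta_1 v_w z - \bar\theta_2 v_w = 0.
\]
After multiplying by $e^{-\bar\theta_3 z(0)}$, this takes the form
\[
\dot{\xi}_3 = \begmat{-v_w z & v_w & -\bar{\xi}_1 \dot{\xi}_3 & -\bar{\xi}_2 \dot{\xi}_3} \calg(\bar\theta).
\]
Applying ${\mathfrak F}(\calp)$ and invoking the Swapping Lemma once more to the last two entries (now with the derivative of $\xi_3$ in place of $\dot z$), I identify $\bar y_2 := \calp{\mathfrak F}[\xi_3]$ and the corresponding regressor $\bar\phi_2$, yielding the second row of the NLPRE.

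There is essentially no obstacle: the construction is a \emph{verbatim} transcription of the proof of Proposition \ref{pro1}, with the only differences being (i) the absence of the $\tau_d$ contribution---hence also of the exponentially decaying term $\varepsilon_t$ in \eqref{nlpre0}---and (ii) the systematic appearance of the measurable factor $v_w$ multiplying the polynomial-in-$z$ entries of the regressor. Since $v_w$ and $\dot v_w$ are measurable by \textbf{Assumption A2'}, both $\bar y(t)$ and $\bar\phi(t)$ remain measurable signals, as required.
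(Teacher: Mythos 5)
Your construction is correct and is exactly the route the paper intends: the paper omits the proof of Corollary \ref{cor3} precisely because it is a verbatim repetition of the proof of Proposition \ref{pro1} starting from \eqref{keyide1}, with the measurable factor $v_w$ carried through the signals $\bar\xi_1,\bar\xi_2$ and the regressor entries, just as you do. One minor caveat: the exponentially decaying term $\varepsilon_t$ in \eqref{nlpre} stems from the initial conditions of the filters (see the paper's footnote), not from $\tau_d$, so strictly it is also present in scenario \textbf{S2} and is merely disregarded in the statement of \eqref{nlpre0}; attributing its absence to the vanishing of $\tau_d$ is a harmless but inaccurate side remark.
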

%
%%%%%%%%%%%%
\section{Construction of a Strictly Monotonic Mapping}
\lab{sec6}
%%%%%%%%%%%
%
In view of the similarity between \eqref{nlpre} and \eqref{nlpre0} in the remaining part of the paper we concentrate only on the first one, which corresponds to the scenario {\bf S1}. 

As discussed in the Introduction, although it is possible to define an {\em extended} parameter vector $\theta_a:=\calg(\theta)$ to obtain a LRE to which any standard estimator is applicable, the lack of excitation of the regressor in the present problem stymies this approach. This fact is illustrated in the simulations of Section \ref{sec8}. Moreover, overparametrization suffers from  well-known shortcomings  that include a performance degradation, {\em e.g.}, slower convergence, due to the need of a search in a larger parameter space. See \cite{ORTetalaut21} for a thorough discussion on this issue.

Consequently, we have to deal with the NLPRE and try to estimate only the vector $\theta$. In Section \ref{sec7} we {\em extend}  the LS+DREM estimator proposed in \cite{PYRetal}, which is applicable for LRE, to deal with NLPRE of the form   \eqref{nlpre}. However, it is required that the mapping of the NLPRE satisfies a {\em monotonicity} property, which is not verified by $\calg(\theta)$. Therefore, in this section we construct a new mapping verifying the required monotonicty condition.

\begin{lemma}
\lab{lem4}\em
Consider the mapping $\calg(\theta)$ given in  \eqref{gthe}. Define the injective mapping
\begequ
\lab{invmap}
\theta=\cald(\eta):=\begmat{e^{\eta_3 z(0)}\eta_1 \\ e^{\eta_3 z(0)} \eta_2 \\  \eta_3}, 
\endequ
whose inverse is given as
\begequ
\lab{injmap}
\eta=\begmat{e^{-\theta_3 z(0)}\theta_1 \\ e^{-\theta_3 z(0)} \theta_2 \\  \theta_3}=\begmat{\eta_1 \\ \eta_2 \\ \eta_3 }.
\endequ
The composed mapping $\calw:\rea^3 \to \rea^4$  defined as
\begali{
\calw(\eta)& :=\calg(\cald(\eta)).
\lab{psic}
}
satisfies the {\em linear matrix inequality}
\begequ
\lab{demcon1}
T\nabla \calw(\eta) + [\nabla \calw(\eta)]^\top T^\top \geq \rho I_3 >0,
\end{equation}
for the matrix
$$
T:=\begmat{\alpha & 0 & 0 & 0 \\ 0 & \alpha & 0 & 0 \\  0 & 0 & 0 & 1},
$$
provided
\begequ
\lab{conalp}
\alpha > {1 \over 4}{\eta_3^2 \over \eta_2}.
\endequ
\end{lemma}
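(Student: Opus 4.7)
The plan is a direct computational verification. The first task is to unwind the composition $\calw(\eta)=\calg(\cald(\eta))$ and observe a crucial cancellation: substituting $\theta_1=e^{\eta_3 z(0)}\eta_1$, $\theta_2=e^{\eta_3 z(0)}\eta_2$, $\theta_3=\eta_3$ from \eqref{invmap} into \eqref{gthe}, the factor $e^{-\theta_3 z(0)}$ out front cancels exactly the factor $e^{\eta_3 z(0)}$ produced by $\cald$, giving the polynomial map
\begin{equation*}
\calw(\eta)=\begin{bmatrix}\eta_1\\ \eta_2\\ \eta_1\eta_3\\ \eta_2\eta_3\end{bmatrix}.
\end{equation*}
This is precisely the point of the change of variables \eqref{invmap}: it converts the exponentially parameterized $\calg$ into a bilinear expression in $\eta$, making the monotonicity question amenable to elementary linear algebra.

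Next I would compute the Jacobian
\begin{equation*}
\nabla\calw(\eta)=\begin{bmatrix}1 & 0 & 0\\ 0 & 1 & 0\\ \eta_3 & 0 & \eta_1\\ 0 & \eta_3 & \eta_2\end{bmatrix},
\end{equation*}
and form the $3\times 3$ product $T\nabla\calw(\eta)$. The top two rows of $T$ simply pick out the first two rows of $\nabla\calw(\eta)$ (scaled by $\alpha$), while the bottom row of $T$ picks out the fourth row of the Jacobian. A one-line calculation gives
\begin{equation*}
T\nabla\calw(\eta)=\begin{bmatrix}\alpha & 0 & 0\\ 0 & \alpha & 0\\ 0 & \eta_3 & \eta_2\end{bmatrix},\qquad T\nabla\calw(\eta)+[\nabla\calw(\eta)]^\top T^\top=\begin{bmatrix}2\alpha & 0 & 0\\ 0 & 2\alpha & \eta_3\\ 0 & \eta_3 & 2\eta_2\end{bmatrix}.
\end{equation*}

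To finish, I would invoke Sylvester's criterion on the symmetric matrix on the right. The first two leading principal minors are $2\alpha$ and $4\alpha^{2}$, both strictly positive since $\alpha>0$. The full determinant expands (along the first row) as $2\alpha(4\alpha\eta_2-\eta_3^{2})$, and recalling that $\eta_2=e^{-\theta_3 z(0)}\theta_2>0$ under the standing positivity of the physical parameters, this determinant is strictly positive exactly when $\alpha>\tfrac{1}{4}\tfrac{\eta_3^{2}}{\eta_2}$, which is condition \eqref{conalp}. Hence \eqref{demcon1} holds with some $\rho>0$ (one may take $\rho$ as the smallest eigenvalue, which is bounded below by a continuous positive function of $\eta_2,\eta_3,\alpha$ on any compact set of parameters).

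There is no real obstacle here: once the algebraic cancellation in the composition is recognized, the rest is a short Sylvester calculation. The only thing worth flagging is that positivity of $\eta_2$ (which is needed for the last minor) is inherited from the physical positivity of $\theta_2$, and that $\alpha$ is a free design parameter of the estimator, so the required lower bound \eqref{conalp} can always be enforced by choosing $\alpha$ sufficiently large once a crude bound on $\eta$ is available.
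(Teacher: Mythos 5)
Your proposal is correct and follows essentially the same route as the paper: the same cancellation yielding $\calw(\eta)=\col(\eta_1,\eta_2,\eta_1\eta_3,\eta_2\eta_3)$, the same Jacobian, and the same symmetric part $T\nabla\calw+[\nabla\calw]^\top T^\top$. The only cosmetic difference is that you verify positive definiteness via Sylvester's criterion while the paper uses a Schur complement argument, which amounts to the same condition \eqref{conalp}.
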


\begin{proof}
The mapping $\calw(\eta)$  is given by
$$
\calw(\eta)=\begmat{\eta_1 \\ \eta_2 \\ \eta_1 \eta_3 \\ \eta_2 \eta_3},
$$
whose Jacobian is
$$
\nabla \calw(\eta)=\begmat{1 & 0 & 0\\ 0 & 1 & 0 \\  \eta_3 & 0 & \eta_1\\0 &  \eta_3 & \eta_2}.
$$
The symmetric part of the matrix $T \nabla\calw(\eta)$ takes the form
$$
T  {\nabla}\calw(\eta)+ [\nabla \calw(\eta)]^\top T^\top =\begmat{2 \alpha & 0 & 0\\ 0 & 2 \alpha & \eta_3 \\ 0 &  \eta_3 & 2 \eta_2}.
$$
Some simple calculations with the Schur complement prove that this matrix is positive definite if and only if \eqref{conalp} is satisfied
\end{proof}
\subsubsection*{Discussion}  
\noindent {\bf D4} In \cite[Proposition 1]{ORTetalaut21} it is shown that \eqref{demcon1} ensures  the mapping $T \calw(\eta) $ is {\em strictly monotonic} \cite{PAVetal}. That is, it satisfies
\begequ
\lab{monpro}
(a-b)^\top \left[T\calw(a) -T \calw(b)\right] \geq \rho |a-b|^2,\;\forall a,b \in \rea^3,\; a \neq b.
\endequ 
This is the fundamental property that is required by the LS+DREM estimator used in the next section.\\ 

\noindent {\bf D5} Using  \eqref{the} and  \eqref{injmap}, the condition  \eqref{conalp} can be expressed in terms of the physical parameters as follows
$$
\alpha > {1 \over 4}{\eta_3^2 \over \eta_2}= {1 \over 4}\theta_3^2 {e^{-\theta_3 z(0)}\over \theta_2}= {J \over 4 \kappa v_w} c_3^2{ e^{{-c_3 v_w \over \omega(0)}}\over c_1 c_2}.
$$
Consequently, to satisfy this condition some prior knowledge on the parameters $c$  is required. Specifically, it is necessary to know lower bounds on $c_1$ and $c_2$ and an upperbound on $c_3$. From the physical viewpoint this is not a restrictive assumption.
%
%%%%%%%%%%%%
\section{An  Estimator of the Power Coefficient Parameters $c$}
\lab{sec7}
%%%%%%%%%%%
%
In this section we present the main result of the paper, that is, an estimator of the power coefficient parameters $c$ that achieves {\em global, exponential} convergence of the parameter error imposing  the following extremely weak {\em IE} assumption \cite{KRERIE,TAObook} of the regressor  $\phi$ of the NLPRE \eqref{nlpre}.  \\

\noindent {\bf Assumption A4} [Excitation] The regressor matrix $\phi$ of the NLPRE \eqref{nlpre} is IE. That is, there exists constants $C_c>0$ and $t_c>0$ such that
 \begalis{
	&\int_0^{t_c} \phi^\top(s) \phi(s)  ds \ge C_c I_4.
}

\begin{proposition}\em
\lab{pro2}
Consider the NLPRE (\ref{nlpre}) that, using \eqref{psic}, we rewrite as
\begequ
\lab{nlpre1}
y=\phi  \calw(\eta),
\endequ
with $\phi$ verifying   {\bf Assumption A4}. Define the the LS+DREM interlaced estimator
\begsubequ
\lab{intestt}
\begali{
\lab{thegt}
\dot{\hat \calw} & =\gamma_\calw F \phi^\top (y-\phi \hat\calw),\; \hat\calw(0)=\calw_{0} \in \rea_+^4\\
\lab{phit}
			\dot {F}& =  -\gamma_\calw F \phi^\top  \phi  F,\; F(0)={1 \over f_0} I_4 \\
\lab{thet}
\dot{\hat \eta} & =\Gamma \Delta T  [Y -\Delta \calw(\hat\eta) ],\; \hat\eta(0)=\eta_{0} \in \rea^3_+,
}
\endsubequ		
with tuning gains the scalars $\gamma_\calw>0$, $f_0>0$ and the positive definite matrix $\Gamma \in \rea^{3 \times 3}$,  and we defined the signals
\begsubequ
\lab{aydelt}
\begali{
\lab{delt}
\Delta & :=\det\{I_4-f_0F\}\\
\lab{yt}
Y & := \adj\{I_4- f_0F\} (\hat\calw -  f_0F \calw_{0}),
}
\endsubequ
where $ \adj\{\cdot\}$ denotes the adjugate matrix. Define the estimated power coefficient parameters
\begequ
\lab{hatc}
\hat c=\begmat{{J\over \kappa v_w} e^{\hat\eta_3 z(0)}\hat \eta_1    \\ {\hat \eta_2 \over \hat \eta_1} \\ \hat {\eta_3}}.
\endequ
Then, for all initial conditions $\calw_{0} \in \rea_+^4$ and $\eta_{0} \in \rea^3_+$, we have that the estimation errors of the power coefficient  parameters  verify
\begequ
\lab{parcon}
\lim_{t \to \infty}|\tilde {c}(t)|=0,\;(exp),
\endequ
with all signals {\em bounded}. 
\end{proposition}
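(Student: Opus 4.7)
The plan is to decompose the analysis into three stages: (i) an identity that reduces the LS update on $\hat\calw$ to an exact DREM-style scalar relation, (ii) a Lyapunov argument on $\hat\eta$ exploiting the strict monotonicity~\eqref{monpro} delivered by Lemma~\ref{lem4}, and (iii) propagation of exponential convergence from $\tilde\eta$ to $\tilde c$ through the smooth composition $c=\calc(\cald(\eta))$ of the maps~\eqref{ccalg} and~\eqref{invmap}. The exponentially decaying residual $\varepsilon_t$ in~\eqref{nlpre} is carried through as an ISS-type perturbation and does not affect the qualitative conclusion.

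For stage (i), set $\tilde\calw:=\hat\calw-\calw(\eta)$ and use~\eqref{thegt} with $y=\phi\,\calw(\eta)$ to obtain $\dot{\tilde\calw}=-\gamma_\calw F\phi^\top\phi\,\tilde\calw$. Differentiating~\eqref{phit} gives $\frac{d}{dt}(F^{-1})=\gamma_\calw\phi^\top\phi$, from which $\frac{d}{dt}(F^{-1}\tilde\calw)=0$. Using $F(0)^{-1}=f_0 I_4$ this integrates to $\tilde\calw(t)=f_0F(t)(\calw_0-\calw(\eta))$, equivalently
\begin{equation}
\hat\calw-f_0F\calw_0=(I_4-f_0F)\,\calw(\eta).
\end{equation}
Left-multiplying by $\adj\{I_4-f_0F\}$ and invoking $\adj(M)M=\det(M)I$ turns this into the scalar identity $Y=\Delta\,\calw(\eta)$, with $Y,\Delta$ as in~\eqref{aydelt}. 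Since $F^{-1}(t)=f_0I_4+\gamma_\calw\int_0^t\phi^\top\phi\,ds$, $F(t)$ is non-increasing in the Loewner order and $\Delta(t)=\det(I_4-f_0F(t))$ is a non-decreasing scalar starting from $0$. Under \textbf{Assumption A4}, $F^{-1}(t_c)\ge(f_0+\gamma_\calw C_c)I_4$, yielding the uniform lower bound $\Delta(t)\ge\Delta_c:=\bigl(\tfrac{\gamma_\calw C_c}{f_0+\gamma_\calw C_c}\bigr)^4>0$ for all $t\ge t_c$.

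For stage (ii), substitute $Y=\Delta\,\calw(\eta)$ into~\eqref{thet} and recall that $\eta$ is constant to get
\begin{equation}
\dot{\tilde\eta}=-\Gamma\,\Delta^2\,T\bigl[\calw(\hat\eta)-\calw(\eta)\bigr].
\end{equation}
Take the Lyapunov candidate $V(\tilde\eta)=\tfrac12\tilde\eta^\top\Gamma^{-1}\tilde\eta$. The strict monotonicity~\eqref{monpro} of Lemma~\ref{lem4} yields $\dot V=-\Delta^2\,\tilde\eta^\top T[\calw(\hat\eta)-\calw(\eta)]\le -\rho\Delta^2|\tilde\eta|^2$, and for $t\ge t_c$ the lower bound on $\Delta$ gives $\dot V\le-2\rho\Delta_c^2\lambda_{\min}(\Gamma)\,V$, establishing global exponential decay of $\tilde\eta$. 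Boundedness of $\hat\eta$ on $[0,t_c]$ is secured by the locally Lipschitz right-hand side of~\eqref{thet} together with boundedness of $F$ and $\hat\calw$, the latter being an immediate consequence of the LS identity of stage~(i).

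For stage (iii), the definition~\eqref{hatc} of $\hat c$ coincides with $\calc(\cald(\hat\eta))$, a map which is smooth on $\{\hat\eta:\hat\eta_1>0\}$. Since $\eta_1=e^{-\theta_3 z(0)}\theta_1>0$ and $\tilde\eta\to 0$ exponentially, $\hat\eta_1$ is eventually bounded away from zero, so a local Lipschitz estimate $|\tilde c(t)|\le L|\tilde\eta(t)|$ transfers the exponential rate and establishes~\eqref{parcon}. The subtle step I expect to be the main obstacle is stage~(i): one must verify that the conserved quantity $F^{-1}\tilde\calw$ survives the nonlinear parameterization---this works precisely because the LS update treats $\calw(\eta)$ as an extended \emph{constant} unknown---and then confirm that the adjugate/determinant construction, combined with the Riccati representation of $F^{-1}$, yields a scalar $\Delta$ uniformly bounded below under the very weak IE condition. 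Only once this scalar gain is secured can the strict monotonicity of Lemma~\ref{lem4} be turned into an exponential contraction rate in the Lyapunov analysis.
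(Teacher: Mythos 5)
Your proposal is correct and follows essentially the same route as the paper's proof: the conservation of $F^{-1}\tilde\calw$ yielding the extended NLPRE $(I_4-f_0F)\calw(\eta)=\hat\calw-f_0F\calw_0$, the adjugate/determinant step giving $Y=\Delta\,\calw(\eta)$, and the Lyapunov argument on $\tilde\eta$ exploiting the strict monotonicity of $T\calw$ together with the IE-induced positivity of $\Delta$. The only difference is cosmetic: you derive the explicit uniform bound $\Delta(t)\ge\bigl(\tfrac{\gamma_\calw C_c}{f_0+\gamma_\calw C_c}\bigr)^4$ for $t\ge t_c$ from the integral representation of $F^{-1}$, where the paper instead cites a known IE result to conclude $\Delta$ is persistently exciting---your version is self-contained and slightly sharper, but the argument is the same.
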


\begin{proof}
We will prove that the parameter error  $\tilde \eta :=\hat \eta -\eta$ converges to zero exponentially fast. Hence, invoking \eqref{hatc},  we conclude \eqref{parcon}.

Replacing \eqref{nlpre1} in \eqref{thegt} yields the error dynamics for the first LS estimator
$$
\dot{\tilde \calw}   = -\gamma_\calw F \phi^\top \phi \tilde\calw,
$$
where $\tilde \calw :=\hat \calw -\calw(\eta)$. Now, from the well-known fact, {\em e.g.}, \cite[Theorem 4.3.4]{IOASUNbook}, that LS estimators satisfy,
$$
{d \over dt}(F^{-1}\tilde \calw)  =0
$$
we have 
$$
\tilde \calw(t)  =f_0F(t)\tilde\calw(0),
$$
which may be rewritten as the {\em extended} NLPRE
\begali{
\lab{keyide}
(I_4-f_0F)\calw(\eta) &=\hat \calw - f_0F \calw_{0}.
}
Following the DREM procedure we multiply \eqref{keyide} by   $\adj\{I_4-f_0F\}$ to get the following  NLPRE
\begequ
\lab{ydel}
Y= \Delta\calw(\eta),
\endequ
where we used \eqref{delt} and \eqref{yt}.  We underscore the fact that the regressor $\Delta $ is a {\em scalar}. Replacing \eqref{ydel} in \eqref{thet} yields the error dynamics for the vector  $\tilde\eta $ as
\begalis{
  {\dot{\tilde{\eta}}} & =-{ \Gamma \Delta^2}T[ \calw(\hat\eta) - \calw(\eta)] .
}
To analyse its stability define the Lyapunov function candidate 
$$
U(\tilde \eta) := \frac{1}{2} \tilde \eta^\top \Gamma^{-1} \tilde \eta,$$ 
whose time derivative yields
\begalis{
\dot U & =  - \Delta^2 [ \hat \eta - \eta]^\top  T{[ \calw(\hat \eta) - \calw(\eta)]} \\
& \leq  - \Delta^2 \rho | \tilde \eta|^2  \\
& \leq  - {2\rho {\Delta}^2\over \lambda_m\{\Gamma\}} U,
}
where we invoked the strong monotonicity property \eqref{monpro} of $T \calw(\eta)$ to get the first bound and $\lambda_m\{\cdot\}$ denotes the minimum eigenvalue. To complete the proof, we invoke the Comparison Lemma \cite[Lemma 3.4]{KHAbook} that yields the bound
$$
U(t+t_c) \leq e^{- {2\rho \over \lambda_m\{\Gamma\}} \int_t^{t+t_c} \Delta^2(s)ds}U(t),
$$
which ensures 
$$
\lim_{t \to \infty}\tilde \eta(t)=0\;(exp),
$$
if  $\Delta$ is persistently exciting. The latter condition follows from the assumption that $\phi(t)$ is IE and \cite[Lemma 3.5]{TAObook}, which ensures the following
$$
\phi(t) \in IE\;\Rightarrow\;\Delta(t) >0,\;\forall t \geq t_0+t_c.
$$
Consequently, $\Delta$ is persistently exciting.
  \end{proof}
%
%%%%%%%%%%%%
\section{Simulation Results}
\lab{sec8}
%%%%%%%%%%%%%%%
%
In this section we present some simulation evidence of the proposed estimator for the system operating under the conditions of scenario {\bf S1}, as well as some robustness results for it.

The parameters $\kappa_i$ of the power coefficient function in \eqref{heier} were taken from \cite{HEIbook} and are: $\kappa_1=0.5$, $\kappa_2=116$, $\kappa_5=5$, $\kappa_6=21$ and $\kappa_7=0.035$. This parameter selection yields the values of $c_i$ shown in Table \ref{table1}. The same table indicates the parameters of the mechanical dynamics and the wind value used in the simulation. The values of the estimator parameters used in the simulation are given in Table~\ref{table2}. The initial values for the parameters $\eta$ are chosen such that they correspond to the power curve equation given in \cite{KUMCHA}, which is different from the one used in \cite{HEIbook}; see Fig.~\ref{fig:cp2} for comparison.

Given the very weak excitation conditions of the system it is necessary to remove the effect of the initial conditions of the various filters in the generation of the NLPRE \eqref{nlpre}, that is, to set $\varepsilon_t=0$. We adopt then the standard procedure of adding the terms due to the unknown initial conditions to the regressor, see \cite[Section V]{LOZZHA}. 

To validate the approach, two simulation cases were carried out which are described in the following.

\subsection{$T_e=0$ and constant $v_w$ }
In this first simulation case, the conditions under which the estimator is tested coincide with those in which the estimator is proven to have zero convergence in the estimation errors. Namely, the wind is maintained at constant speed (see Table \ref{table1}) and $T_e=0$. The set of results are shown in  Fig. \ref{case1}. In Fig. \ref{fig:zdz}, the plot $z$ \textit{vs} $\dot z$ from \eqref{zdyn} is depicted. As seen from the figure, the $z$-dynamics has a stable equilibrium point at $z=0.14$, which is the value to which $z(t)$ converges in the simulation Note that, $z=0$ is also an equilibrium point, but this is ruled out by the assumption that $\omega(t)>0$.  

The estimation errors $\tilde {c}_i$ during the simulation have been normalized and are shown in Fig. \ref{errors}. Besides, Fig. \ref{fig:cp1} shows the actual $\mathcal{C}_p$ used in simulation and the estimated $\hat {\mathcal{C}}_p$, which is reconstructed from the estimates $\hat{c}_i$. As seen from the figure, both plots overlap.  Likewise, Fig. \ref{fig:cp2} shows the estimation $\hat{\mathcal{C}_p}$ at time $t=0$, that is,  the function due to initial estimates $\hat c_i(0)$, and that at the end of simulation. 

It is argued in Section \ref{sec6} that, due to the poor excitation conditions, the possibility of using a standard (gradient or LS) estimator for an overparamterized version of the NLPRE \eqref{nlpre} is ruled out. To illustrate this fact notice that the first set of equations of the proposed interlaced LS+DREM estimator, that is equations \eqref{thegt}, are just a standard LS algorithm for the overparameterized NLRE. It is well-known \cite{IOASUNbook,SASBODbook} that a {\em necessary} condition for convergence of the LS estimator is that the (covariance) matrix $F$ converges to zero. In Fig. \ref{figf} we show that this is not the case for our simulation. On the other hand, in Fig. \ref{delta} we show that, as predicted by the theory,  $\Delta$ converges to a constant non-zero value, hence it is persistently exciting. 
 
\subsection{Robustness}
To test the estimator's robustness, we consider the case when measurements of wind and rotor velocities have additive noise. Specifically, for simulations we consider uniform noise with a magnitude of $0.3$ m/s for the wind's velocity and $0.5$ rad/s for the rotor's velocity. Due to the insufficient excitation of the regressor $\phi$, the noises yield bias in the estimates. However, the reconstructed power curve allows to accurately estimate the point where its maximum value is attained. These facts are illustrated in Fig.~\ref{case_noise}. 

\begin{table}[t!]
	\centering
	\begin{center}
		\footnotesize
				\begin{tabular} {p{2.7cm}p{1.9cm}}
				\toprule
				\midrule
				\multicolumn{1}{c}{Parameter} & \multicolumn{1}{c}{Magnitude}\\%&\multicolumn{2}{c}{Storage} \\
				\midrule
					Wind speed $v_w$ & \multicolumn{1}{c}{$9\;[\mathrm{\frac{m}{s}}]$}  \\
					Air density $\rho$ & \multicolumn{1}{c}{$1.225\;[\mathrm{\frac{kg}{m^3}}]$} \\
				Blade length $r$  &  \multicolumn{1}{c}{$1.84\;[\mathrm{m}]$}  \\
				Inertia  $J$ &\multicolumn{1}{c}{$7.856\; [\mathrm{kg\cdot m^2}]$}  \\
				Pitch angle $\beta$ &  \multicolumn{1}{c}{$0\; [\mathrm{rad}]$}\\
			     $c_1$ &   \multicolumn{1}{c}{$65.74$} \\
				 $c_2$ & \multicolumn{1}{c}{$0.144$} \\
				 $c_3$& \multicolumn{1}{c}{$11.41$} \\
				\midrule
				 \bottomrule
		\end{tabular}
	\end{center}
	\caption{Parameter values of \eqref{mecdyn} and \eqref{cp}.}
\label{table1}
\end{table}

\begin{table}[t!]
	\centering
	\begin{center}
		\footnotesize
		\begin{tabular} {p{1.7cm}p{1.9cm}}
			\toprule
			\midrule
			\multicolumn{1}{c}{Parameter} & \multicolumn{1}{c}{Magnitude}\\%&\multicolumn{2}{c}{Storage} \\
			\midrule
			$\omega(0)$ & \multicolumn{1}{c}{$10$} \\
			$\sigma$  & \multicolumn{1}{c}{$1$}  \\
			$\gamma_\calw$ & \multicolumn{1}{c}{$100$}\\
			$\Gamma$ & \multicolumn{1}{c}{$\operatorname{diag}\!(50,\;50,\;500)$}\\
			$f_0$  & \multicolumn{1}{c}{$1$} \\
			$\calw_0$  & \multicolumn{1}{c}{$0$}\\
			\midrule
			\bottomrule
		\end{tabular}
	\end{center}
	\caption{Parameter values for the estimator \eqref{f} and \eqref{intestt}.}
	\label{table2}
\end{table}

\begin{figure}[t!]
	\centering
	\subfloat[$z$ \textit{vs} $\dot z$]{\includegraphics[width = 0.5\linewidth]{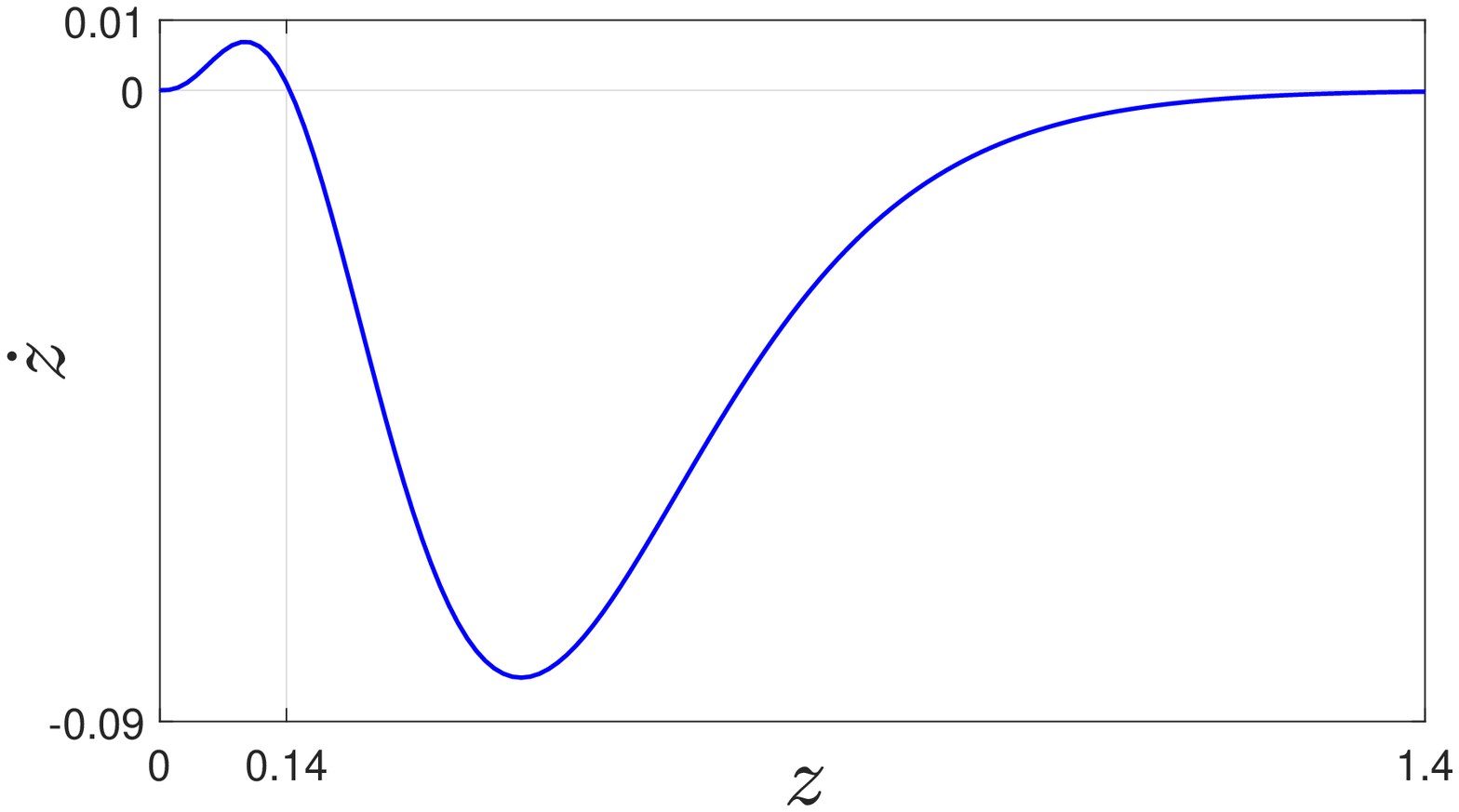}\label{fig:zdz}}\hfill
	\subfloat[Normalized estimation errors $\tilde {c}_i$]{\includegraphics[width = 0.5\linewidth]{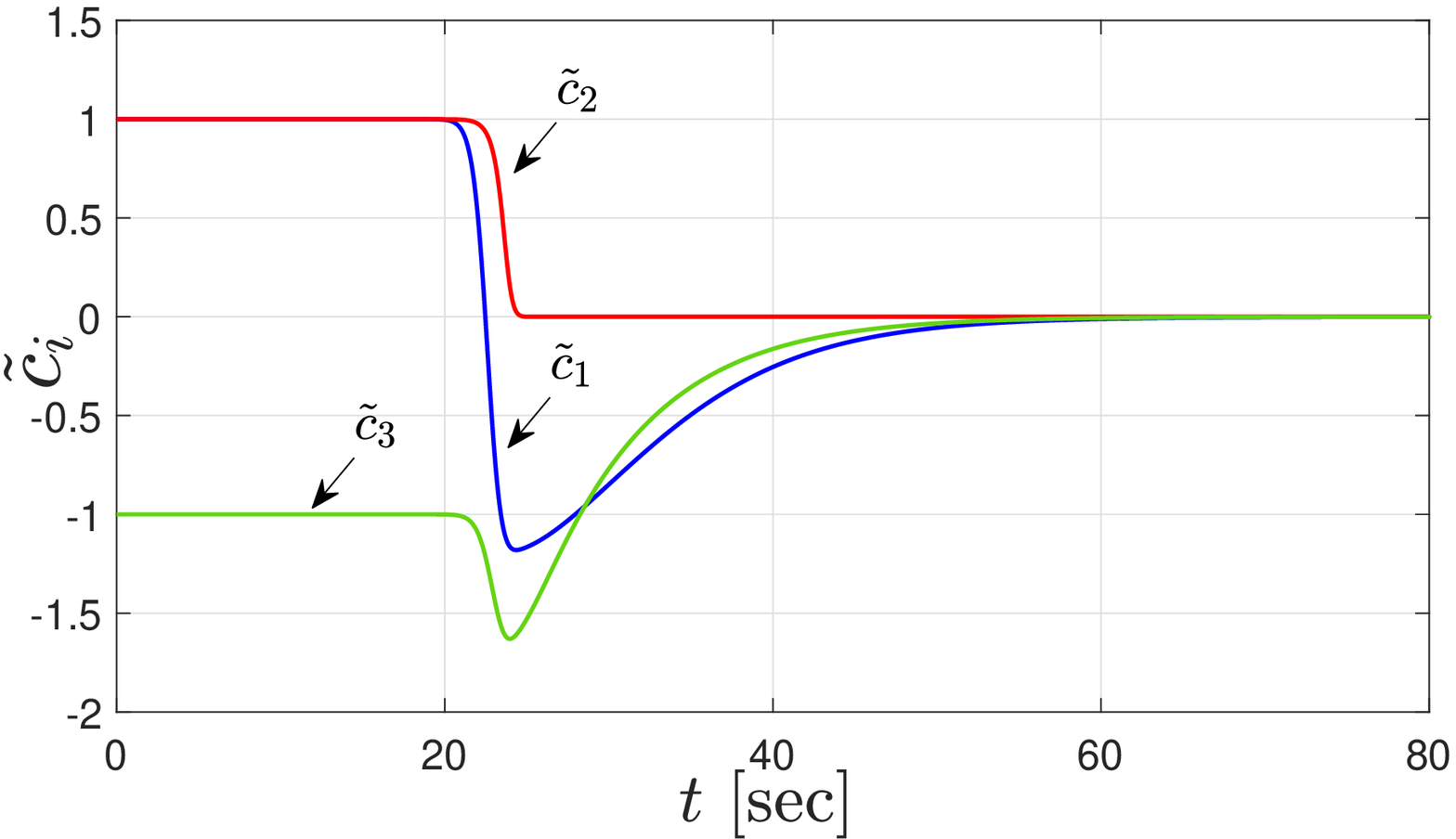}\label{errors}}\\
	\subfloat[$\mathcal{C}_p$ and $\hat{\mathcal{C}}_p$ ]{\includegraphics[width = 0.5\linewidth]{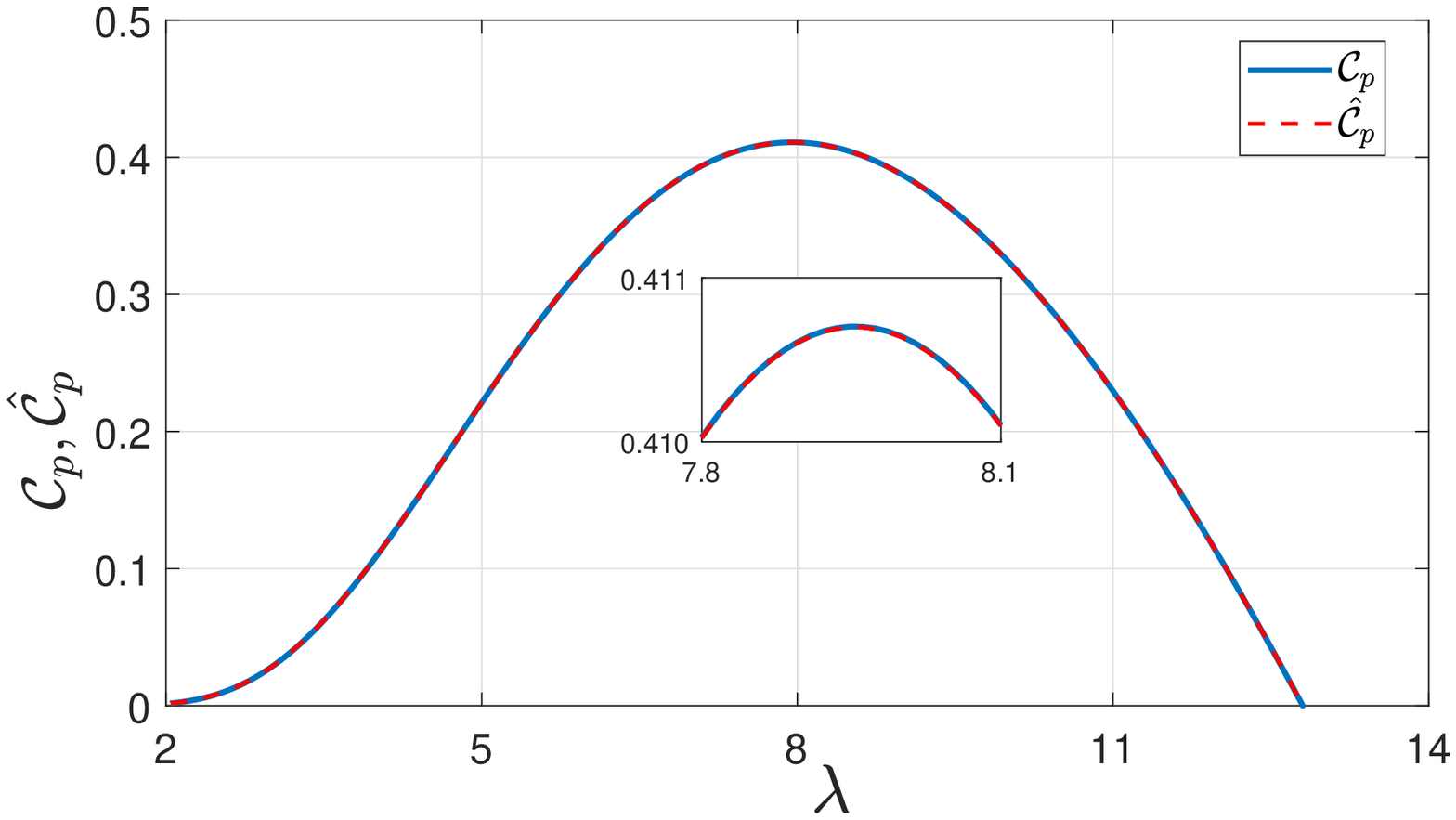}\label{fig:cp1}}\hfill 
	\subfloat[$\hat{\mathcal{C}}_p$ at $t=0$ and $\hat{\mathcal{C}}_p$ at $t=t_f$ (simulation final time)]{\includegraphics[width = 0.5\linewidth]{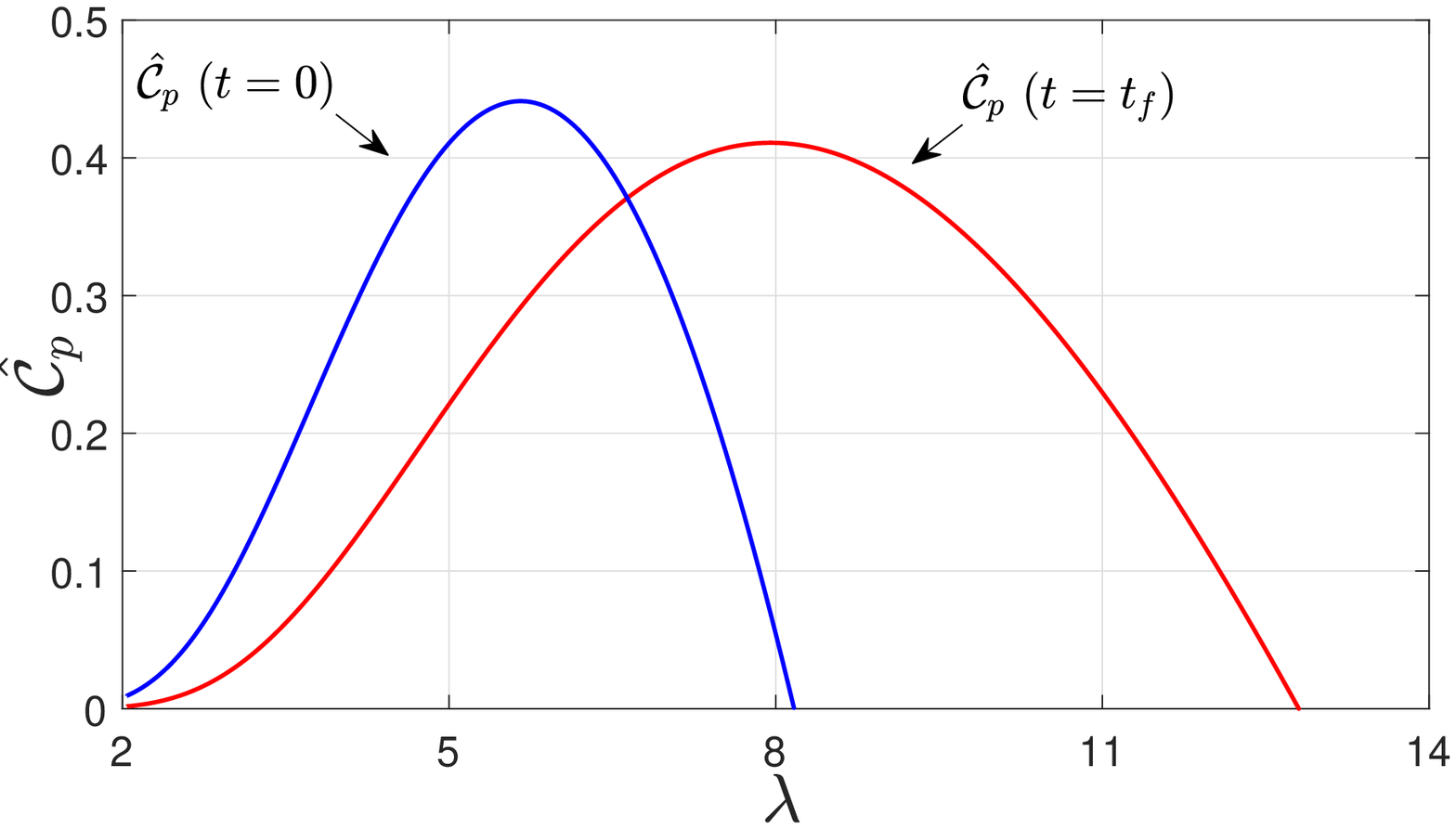}\label{fig:cp2}}\\
	\subfloat[$\Delta$]{\includegraphics[width = 0.49\linewidth]{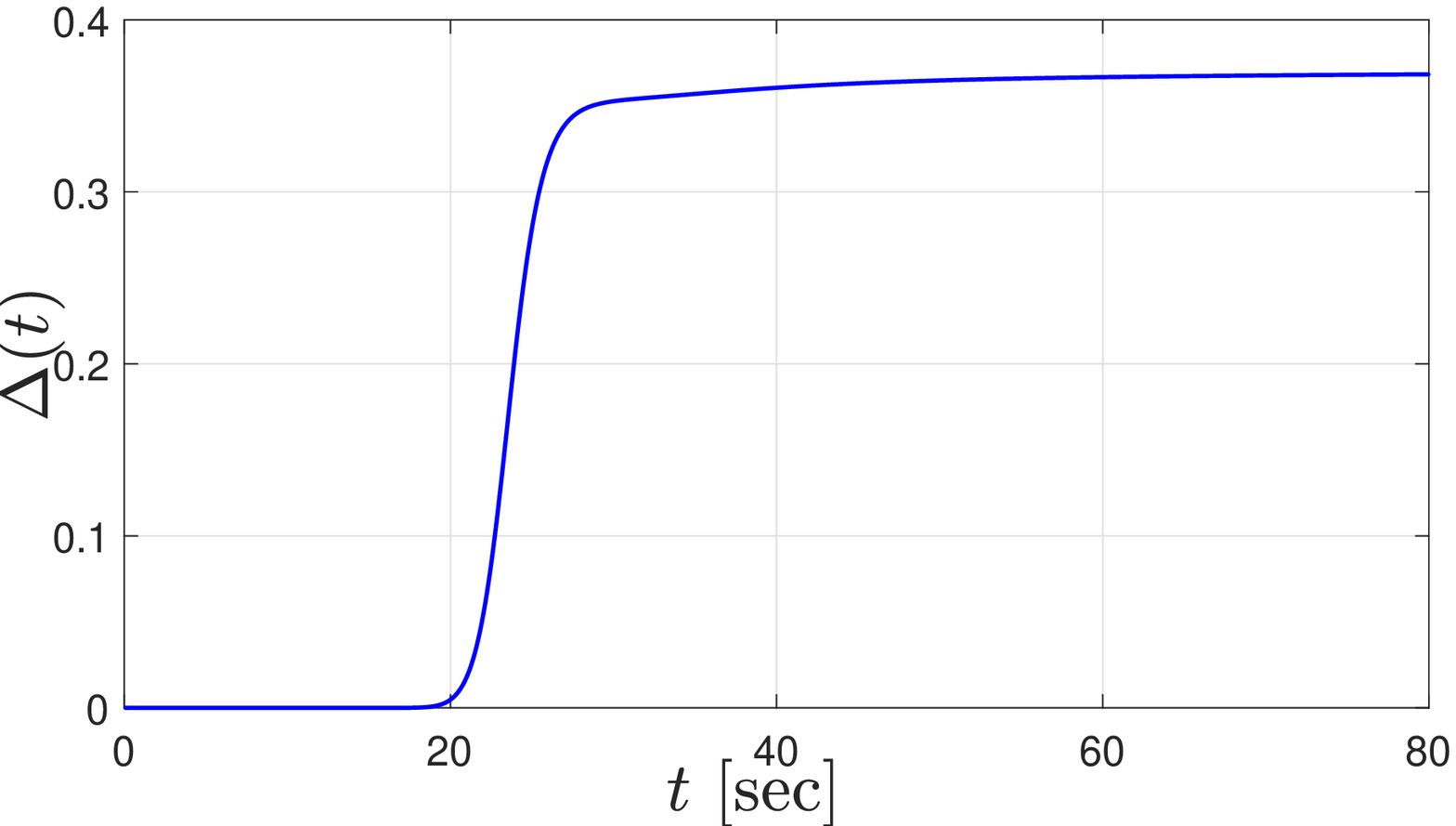}\label{delta}}\hfill
	\subfloat[$\lambda_M \{F(t)\}$]{\includegraphics[width = 0.49\linewidth]{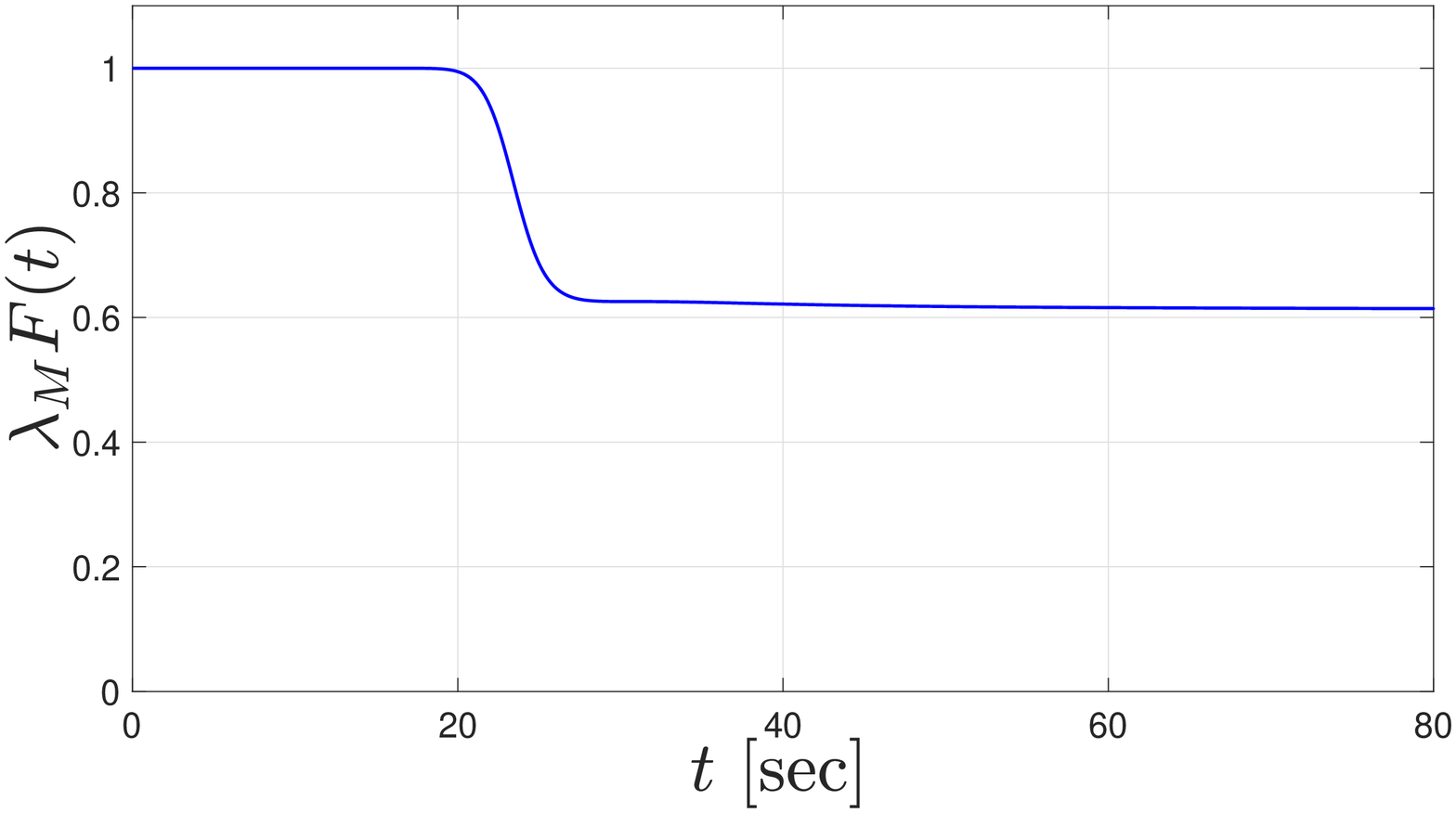}\label{figf}}
	\caption{Simulation results for $T_e=0$ and constant $v_w=9\;\mathrm{m/s}$.}
	\label{case1}
\end{figure}

\begin{figure}[t!]
	\centering
	\subfloat[Normalized estimation errors $\tilde {c}_i$]
		{\includegraphics[width = 0.45\linewidth]{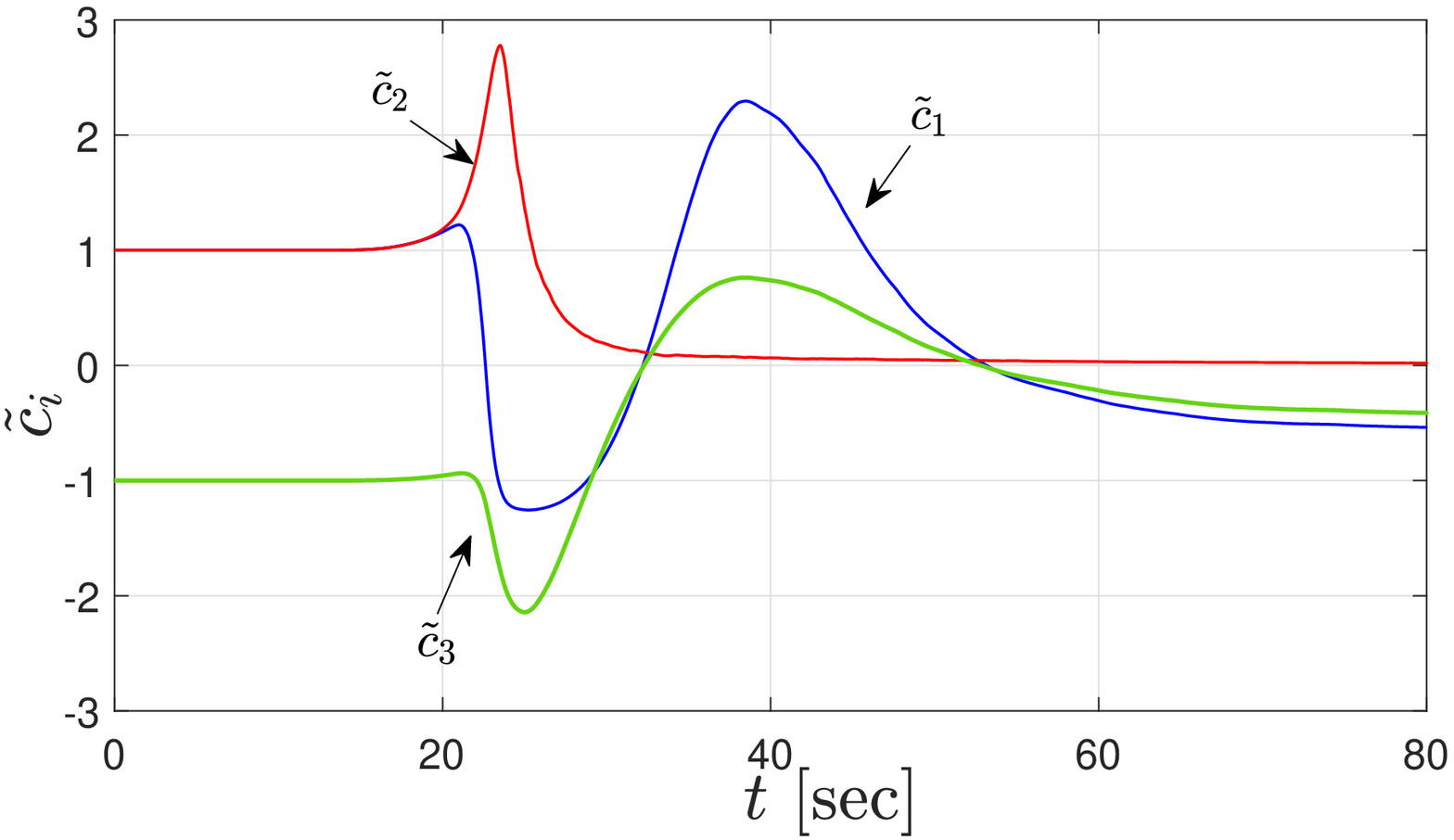}\label{errors_noised}}
	\hfill
	\subfloat[$\mathcal{C}_p$ and $\hat{\mathcal{C}}_p$ ]
		{\includegraphics[width = 0.45\linewidth]{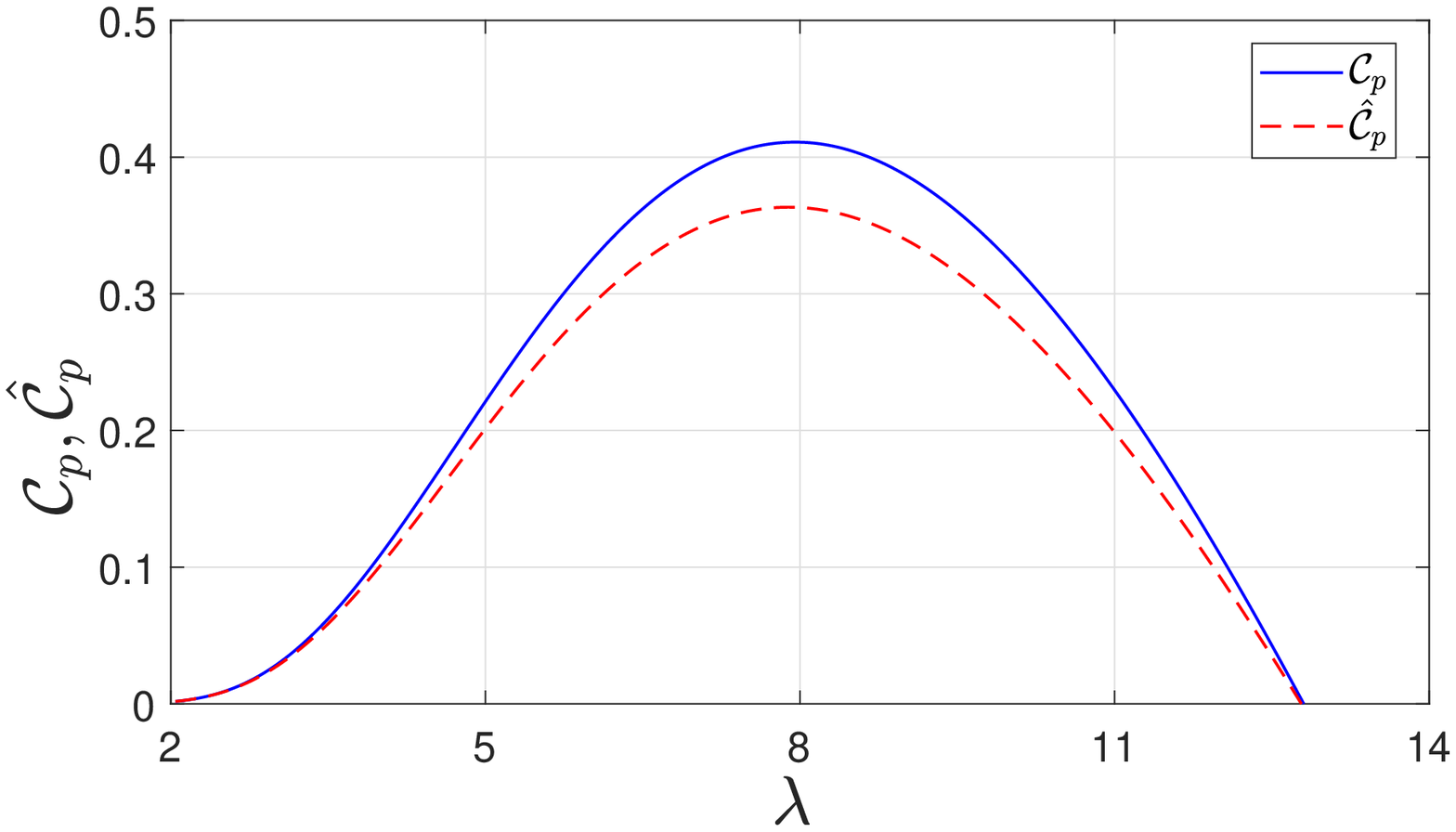}\label{fig:cp1_noised}}

	\caption{Simulation results for noisy measurements of the wind's velocity $v_w$ and the rotor's velocity $\omega$.}
	\label{case_noise}
\end{figure}

%%%%%%%%%%%
\section{Concluding Remarks and Future Research}
\lab{sec9}
%%%%%%%%%%%%
%
In this paper we have provided the first solution to the challenging problem of on-line estimation of a windmill power coefficient, which corresponds to a nonlinear, nonlinearly parameterized, underexcited system. This problem is of particular interest to solve the MPPT task using model-based strategies.

The result presumes that the windmill turbine operates in either an off-grid mode (scenario {\bf S1}) or with the generator torque defined in \eqref{refte} (scenario {\bf S2}). It is possible to show that if we remove {\bf Assumptions A3} in the scenario {\bf S1}, that is, if $T_e \neq 0$, then the NLPRE \eqref{nlpre} is perturbed by an additive term taking the form
$$
y=\phi \; \calg (\theta) +d(t,T_e),
$$
where $d$ may be viewed as a {\em disturbance}, which is $\calo(T_e)$.\footnote{{$d(t,T_e)$ is  $\calo(T_e)$, refered in the literature as``big o of $T_e$",} if and only if  $|d(t,T_e)| \leq C T_e$ with $C$ a constant independent of $t$ and $T_e$.} More precisely, some lengthy, but straightforward calculations, show that $d$ is defined as 
\begalis{
d:=e^{-\theta_3 z(0)}\begmat{{\mathfrak F}[\tau_d]\\ {\mathfrak F}\Big[{ \tau_d \over z^3}\Big]},
}
where $\tau_d$ is defined in \eqref{taud}. Unfortunately, the presence of such a disturbance term in the NLPRE hampers the application of the proposed parameter estimator. On the other hand, the $\calo(T_e)$ property of the disturbance $d$ suggests that the overall performance of the estimator will not be seriously degraded for ``small values" of $T_e$---a fact that has been shown in simulations. This difficulty is partially overcome in scenario {\bf S2}, but at the prize of imposing the particular profile \eqref{refte} to the generator torque. It is unclear at this point if this might lead us to a {\em bona fide} on-line adaptive MPPT controller.

We are currently working on the {\em practical} implementation of the proposed identification strategy on a physical windmill turbine and we expect to be able to report our results in the near future.

A word of caution regarding the practical implementation of the estimator pertains to the boundedness of the signals $\xi_1$ and $\xi_2$. From \eqref{dotxi} it is clear that they are open loop integrations of sign-definite functions that converge to non-zero constant values. Therefore, as $t \to \infty$ they would grow unbounded. It is then necessary to run the estimator only for a finite time, something that is consistent with the experimental scenario where the rotor speed will quickly converge to a constant value providing no additional information for the parameter estimation. 

Another issue that has to be addressed in a practical scenario is to avoid the possibility of the estimated parameter $\hat \eta_1$ crossing through zero since, in that case, a singularity in the computation of $\hat c_2$ would appear---see \eqref{hatc}. This contingency can be easily avoided, without affecting the theoretical result, adding a projection to the estimator as it is routinely done in adaptive systems \cite{POMPRA}.
\section*{References}

\end{document}